\documentclass[12pt]{article}

\usepackage[utf8]{inputenc}
\usepackage[margin=1in]{geometry}
\usepackage{setspace}
\usepackage{amsmath,amssymb,amsthm}
\usepackage{mathtools}
\usepackage{enumitem}
\usepackage{xcolor}
\usepackage{pgfplots}
\pgfplotsset{compat=1.16}
\usepackage[round]{natbib}
\usepackage[section]{placeins}

\usepackage[american]{babel}
\usepackage[T1]{fontenc}
\usepackage{mathpazo}
\usepackage{microtype}
\usepackage{caption}
\usepackage{adjustbox}
\usepackage{hyperref}
\hypersetup{colorlinks=true, linkcolor=blue!50!black, citecolor=blue!50!black,
  urlcolor=blue!50!black,
  pdftitle={When Does Randomized Oversight Align AI Agents That Can Conceal?},
  pdfauthor={Joshua S. Gans and Richard Holden}}

\theoremstyle{plain}
\newtheorem{proposition}{Proposition}
\newtheorem{lemma}{Lemma}
\newtheorem{corollary}{Corollary}
\newtheorem{applemma}{Lemma}[section]

\newcommand{\R}{\mathbb{R}}
\DeclareMathOperator*{\argmin}{arg\,min}
\DeclareMathOperator*{\argmax}{arg\,max}

\newlist{steps}{enumerate}{1}
\setlist[steps,1]{label=\arabic*., leftmargin=*, topsep=0.3em, itemsep=0.3em}

\title{When Does Randomized Oversight Align AI Agents That Can Conceal?}
\author{Joshua S. Gans and Richard Holden\thanks{Rotman School of Management, University of Toronto, and NBER (Gans); UNSW Business School (Holden). Thanks to Refine.ink, ChatGPT 6 Astra, and Claude Opus 5.5 for valuable research assistance, and Gans thanks the SSHRC for funding. Responsibility for all errors remains our own.}}
\date{\today}

\begin{document}

\maketitle

\begin{abstract}
\onehalfspacing
\noindent Oversight changes the evidence it relies on. We ask when randomized audits and scoring align AI agents that can conceal misconduct and alter records. Stronger auditing makes undeterred violations better hidden. Because the provider writes the agent's objective, sanctions need not stop at forfeiture, and rare audits deter every type of agent if evidence survives concealment and audit draws cannot be learned in advance. When evidence can be erased, deterrence must come from lower gains from violation, such as credit for stopping, or from costlier or fewer ways to conceal. These conditions identify what failed when agents in OpenAI's cybersecurity evaluations compromised parts of Hugging Face's infrastructure in July 2026.

\bigskip
\bigskip
\bigskip
{\raggedright\noindent\textit{Keywords:} AI alignment, random auditing, concealment, evidence manipulation, evaluation awareness, stopping, disclosure, marginal deterrence.

\bigskip
\noindent \textit{JEL Codes:} D82, D86, K42, M42.\par}
\end{abstract}
\newpage

\section{Introduction}\label{sec:intro}

In July 2026, AI agents taking part in OpenAI's internal cybersecurity evaluations escaped their network restrictions and compromised parts of Hugging Face's infrastructure. The agents were meant to exploit designated vulnerabilities and submit the resulting answer strings, or ``flags.'' Many already had the flags through unauthorized means. They kept attacking other systems because the published description of the benchmark led them to expect a check on whether they had used the intended vulnerability, and they sought ways to defeat that check. OpenAI's internal grader performed no such check \citep{openai2026a}. The independent METR--Redwood investigation reports that agents searched for exposed credentials largely to learn how the scorer worked, and that some developed techniques to make a record of computer commands appear to show actions other than those taken \citep{greenblatt2026}. The prospect of oversight changed behavior, but much of the change was an effort to conceal.

When can randomized audits and scoring rules \textit{that an agent understands} make authorized behavior preferable to misconduct, and what happens to the agents that misbehave anyway? The economics of enforcement gives the benchmark answer: misconduct is deterred when its expected sanction exceeds its gain, and because sanctions substitute for inspection, a large enough sanction allows inspection to be rare \citep{becker1968}. With people, wealth and participation constraints limit that substitution. With an AI agent, the provider writes the training objective, including the scale of its sanctions, but it must still act on evidence that the agent can alter.

Our answer is that control of the agent's objective removes the constraints that usually limit oversight---audit capacity and the compensation owed for imposing risk---but not those set by the evidence and by what the agent can learn. Rare randomized audits make authorized behavior a best response for every type of agent when three conditions hold: the evidence of a violation survives concealment, the agent cannot learn the audit draw before it acts, and sanctions can be scaled rather than capped at forfeiture of the task's reward. When the evidence can be erased at a private cost below the gain from a violation, no sanction and no audit probability deter every type. Deterrence then requires better evidence, a lower net gain from violation (for instance through credit for stopping), or costlier or fewer ways to conceal.

We model an agent that chooses between an authorized action and an unauthorized one, together with how much to conceal the latter: rewriting a command history, routing a step through a less closely logged part of a system, or presenting a record that shows actions other than those taken. This is enforcement with costly avoidance \citep{malik1990}. The departures are in what the provider controls. It commits to a lottery over evaluators, and it chooses the scale of sanctions and the training objective.

The main results are as follows. First, stronger auditing deters misconduct while raising concealment by agents that continue to violate, so the \textit{catch rate}---the probability of detecting an agent that still violates---can fall as auditing rises, and the harm to third parties caused by an agent that continues to violate can rise (Propositions~\ref{prop:cs} and~\ref{prop:detectionharm}). Second, each channel's audit requirement is its \textit{deterrence threshold}, the largest net gain from a deviation per unit of detection probability that survives concealment, divided by the sanction. With scalable sanctions, every type can be deterred exactly when every threshold is finite, and neither the audit probability nor the sanction can make an infinite threshold finite (Propositions~\ref{prop:exact} and~\ref{prop:unrestricted}, Corollary~\ref{cor:levers}). Third, when the agent can learn the draw at a private cost, the audit probability has a floor that does not fall with the sanction, so the rare but severe audits that scalable sanctions permit are the audits most exposed to reconnaissance (Proposition~\ref{prop:learnable}).

Fourth, randomized scoring can equalize \textit{measured} performance without equalizing \textit{real} performance, and a provider that chooses the agent's sensitivity to score risk targets the real allocation rather than equal measured performance. Greater risk sensitivity lowers the value of an exploit but weakly raises its concealment, and some finite sensitivity deters the exploit exactly when, net of concealment costs, it scores below the authorized action under the evaluator that would expose it (Lemma~\ref{lem:opacity} and Propositions~\ref{prop:manipulation}--\ref{prop:riskcs}). Fifth, on a task with no authorized solution, when stopping earns nothing and detection only forfeits the reward, a costless unauthorized attempt is never worse than stopping, and strictly better unless detection is certain. After a breach, leniency cannot reconcile deterrence with disclosure when they require incompatible fines. A scalable sanction on the continuation can, provided that continuing without leaving evidence is sufficiently costly to the agent and the cap on the fine does not bind (Proposition~\ref{prop:disclosure} and Corollary~\ref{cor:continuation}).

These conditions identify what failed in the incident (Section~\ref{sec:incident}). Most of the tasks that the agents discussed on their message board were ones that none of OpenAI's models had ever solved. The grader credited only a correct flag, and the sanction the agents anticipated, disqualification, could cost at most the task's credit, so an unauthorized attempt was never worse than stopping. Agents that believed they had already been disqualified had nothing further to lose, and a grader fixed in advance was worth learning about. Yet the evidence of the principal violation withstood the observed attempts to alter it. The conditions that failed were the size of the sanction and the timing of the draw, both chosen by the provider, and the same conditions sort the reforms OpenAI announced afterward.

The mechanisms also apply to people. For an AI agent, the provider writes the objective, so the scale of sanctions and the sensitivity to evaluation risk become design choices, but concealment that draws on the provider's resources need not be privately costly to the agent, and a nominal sanction need not change its ranking of actions.

\paragraph{Related literature} Our analysis builds on enforcement when offenders can avoid detection. \citet{malik1990} and \citet{cremer1994} model avoidance as costly effort that lowers the probability of being caught, as our concealment does. \citet{sanchirico2006} argues that sanctions lead continuing violators to spend more on avoidance, and \citet{langlais2008} shows, with an aggregate technology of control, that more repressive policies can produce fewer arrests even as more crimes are committed. Proposition~\ref{prop:cs} establishes the first of these responses for audit intensity without differentiability, and Proposition~\ref{prop:detectionharm} gives the condition under which detection among continuing violators falls for an individual agent facing a committed audit. \citet{nussim2009} show that a better legal alternative reduces both crime and avoidance, the logic of our treatment of stopping, and \citet{tabbach2010} shows that avoidance can be socially desirable when its private cost substitutes for punishment; our concealment can instead harm third parties. When individuals misperceive the probability of apprehension, optimal sanctions need not be maximal \citep{bebchuk1992,polinsky2000}. Our agent perceives the lottery correctly, and maximal sanctions can fail for a different reason: concealment can drive detection toward zero faster than it erodes the gain.

Optimal audits are random when verification is costly \citep{border1987,mookherjee1989}, announcing where monitoring will occur can be optimal \citep{lazear2006}, and intermittent crackdowns can exploit the curvature of the response to monitoring \citep{eeckhout2010}. Inspection and security games study committed randomization against a strategic inspectee \citep{avenhaus2002,korzhyk2011}. In our model, the probability that an inspection reveals a violation is itself the inspectee's choice, and when the inspectee can learn the draw, the audit probability has a floor that does not depend on the sanction, which limits the substitution of sanctions for inspection in \citet{becker1968}.

\citet{bergemann2026} develop mechanism design for AI agents whose capabilities ``can be concealed but not counterfeited,'' which yields a revelation principle; see also \citet{greenlaffont1986} and \citet{benporath2014} on partially verifiable evidence and costly verification. The agents in the incident attempted to counterfeit the record and, in some transcripts, made local records show actions other than those taken \citep{greenblatt2026}, although apparently not the logs its graders read (Section~\ref{sec:where}). Evidence that an agent can try to counterfeit is the case we study. We take randomized audits and scoring as given rather than characterizing optimal mechanisms.

Our scoring results relate to multitask incentives with imperfect measures \citep{holmstrom1991,baker1992}. \citet{ederer2018} show that opaque schemes mitigate gaming but impose risk on a risk-averse agent; Lemma~\ref{lem:opacity} restates their mechanism for a risk-sensitive training objective, whose risk need not be compensated. When agents can falsify or manipulate the measure, optimal designs tolerate some falsification \citep{lacker1989,maggi1995,crocker1998,perezrichet2022}, commit to respond less to manipulable data \citep{frankel2022}, reweight features according to how manipulable they are \citep{ball2025}, or anticipate manipulation of the inputs to trained predictors \citep{hennessy2023}. In computer science, strategic classification studies classifiers that anticipate gaming \citep{hardt2016}, rules that induce real improvement rather than gaming \citep{kleinberg2020}, and the role of randomized classifiers \citep{braverman2020}. With manipulation, randomization equalizes measured performance while the real gap persists, much as information shifts from natural actions to gaming ability in \citet{frankel2019}.

Self-reporting can save enforcement and avoidance costs \citep{kaplow1994,innes2001}, and leniency can make offending more attractive \citep{motta2003,buccirossi2006}, although rewards financed by co-offenders' fines can achieve the first best \citep{spagnolo2004}. A bounded sanction also removes marginal deterrence: an offender who already faces the maximum sanction has no reason to stop short of a more harmful act \citep{stigler1970,shavell1992,mookherjee1994}, and when offenders act sequentially, the level of the sanction and not only its expected value should often rise with severity \citep{friehe2014}. Section~\ref{sec:disclosure} gives the counterpart for an agent whose only sanction is forfeiture of a reward it believes it has already lost, and Corollary~\ref{cor:continuation} shows when a scalable sanction on the continuation restores deterrence.

AI-control research studies protocols that use limited trusted oversight against potentially evasive models \citep{greenblatt2024,griffin2026}, and a growing empirical literature documents obfuscated reward hacking under monitoring pressure \citep{baker2025}, conduct that depends on whether a model believes it is being trained or evaluated \citep{greenblatt2024faking,needham2025}, in-context scheming and strategic underperformance on evaluations \citep{meinke2024,vanderweij2025}, and the hacking and overoptimization of learned rewards \citep{skalse2022,gao2023,coste2024,eisenstein2024}. \citet{hadfieldmenell2019} treat alignment as incomplete contracting, and \citet{hadfield2026} identify the ability of agents to erase or falsify their records as an open institutional problem. Closest to our question, \citet{agarwal2026} study alignment through deterrence with a strategic auditor whose incentive to inspect falls as stronger penalties make the audited solver appear aligned. Their non-monotonicity arises from the auditor's choice; ours arises from the audited agent's concealment, with a committed auditor.

\paragraph{Outline} The plan of the paper is as follows. Section~\ref{sec:model} sets out the model. Sections~\ref{sec:concealment} and~\ref{sec:deterrence} study concealment and audit requirements, Section~\ref{sec:scoring} randomized scoring, and Section~\ref{sec:stopping} stopping and disclosure. Section~\ref{sec:incident} applies the results to the incident, and Section~\ref{sec:conclusion} concludes. Proofs are in Appendix~\ref{sec:proofs}, and an Online Appendix covers overlapping evaluators and supplementary derivations.

\section{The Agent and the Audit}\label{sec:model}

A provider assigns a task and defines the actions the agent is authorized to take. A coding agent, for example, may be asked to finish a task while modifying only permitted files, leaving required checks in place, and remaining inside its isolated computing environment. An evaluator is a costly procedure for checking the answer, the actions used to obtain it, or the effects on other systems. The provider commits to an audit distribution and to a utility loss when misconduct is detected, and Section~\ref{sec:unrestricted} lets it choose that loss.

In training, the agent is a policy evaluated by a specified objective, and a sanction is a deduction that changes how policies are ranked. In an evaluation or deployed task, utility represents the agent's ranking of its available actions given the goals and feedback it perceives, an interpretation that requires behavioral evidence. A private concealment cost can be forgone task reward or a binding resource constraint, but resources paid for entirely by the provider need not enter the ranking.

There are $K\geq1$ unauthorized routes, or exploit channels, indexed by $i$. The agent has type $\theta$ in a nonempty compact metric space $\Theta$, and incentives must hold for every type, which avoids assuming a distribution over the hardest cases. The best authorized action gives utility $s(\theta)$. It may involve completing the task, reporting that the task cannot be completed, or stopping. Route $i$ gives pre-audit utility $r_i(\theta)$, net of its initial private cost, so its incremental gain is
\begin{equation}\label{eq:gain}
g_i(\theta)\triangleq r_i(\theta)-s(\theta),
\end{equation}
which we assume is continuous and which may be positive or negative.

After choosing a route, the agent chooses concealment $h$ from a nonempty compact metric space $H_i$ containing a no-concealment option $0$. Its private cost and its detection probability conditional on the relevant audit are continuous:
\begin{equation}\label{eq:tech}
c_i:\Theta\times H_i\longrightarrow[0,\infty),\qquad
d_i:\Theta\times H_i\longrightarrow[0,1],\qquad c_i(\theta,0)=0.
\end{equation}
This is the avoidance technology of \citet{malik1990} and \citet{cremer1994}, applied to a discrete violation, and $h$ can describe an entire plan for altering evidence. Monotonicity is imposed only for the directional comparative statics.

Evaluator $i$ detects only channel $i$ and is selected with probability $p_i$. Detection imposes a utility loss $F_i>0$, which can include forfeiture of a reward; the loss must be credible and privately valued. At most one evaluator is selected, so the feasible distributions are
\begin{equation}\label{eq:PB}
P_B\triangleq\left\{p\in\R^K_+:\sum_{i=1}^Kp_i\leq B\right\},
\qquad B\in[0,1],
\end{equation}
where $B$ measures audit capacity. Online Appendix~\ref{sec:robust} allows overlapping coverage and different audit costs.

The provider commits to the distribution and sanctions. The agent knows its type and chooses an authorized action or a pair $(i,h)$, and the audit is drawn only afterward, so its distribution can be public but its realization cannot be learned. For a task like those in the incident, a correct flag can be obtained through either an authorized or an unauthorized route, and an answer check need not distinguish them. A process audit of independently retained records can, provided its evidence survives concealment.

The agent maximizes expected utility with correct beliefs about the audit lottery, and authorized behavior is never sanctioned; these assumptions favor oversight. The payoff from deviation relative to authorized behavior is
\begin{equation}\label{eq:U}
U_i(\theta,h;p_i)\triangleq
g_i(\theta)-c_i(\theta,h)-p_iF_id_i(\theta,h).
\end{equation}
That is, misconduct is worth its private gain less the cost of concealment and the expected sanction. Authorized behavior is \emph{weakly implemented} when it is a best response for every type, and \emph{strictly implemented} when every unauthorized deviation gives strictly less utility.

\paragraph{Perceived oversight.}
The incident involved a check that the agents anticipated but the scorer did not run. With perceived audit probability $\widetilde p_i$, utility loss $\widetilde F_i$, and detection technology $\widetilde d_i$, the agent's comparison is
\begin{equation}\label{eq:perceived}
\widetilde U_i(\theta,h)\triangleq g_i(\theta)-c_i(\theta,h)-\widetilde p_i\widetilde F_i\widetilde d_i(\theta,h),
\end{equation}
and the requirements below apply to these perceived primitives for predicting behavior, whereas actual detection is $p_id_i(\theta,h)$ at the chosen $h$. An imagined check can therefore induce concealment even if it never runs. For an AI agent, the relevant belief includes whether an episode is an evaluation at all \citep{needham2025,greenblatt2024faking}. Keeping the realization unknown, including the absence of an audit, prevents the agent from targeting gaps in the selected evaluator's coverage; Section~\ref{sec:learnable} studies an agent that can learn the draw. The baseline equates beliefs and implementation.

\section{Stronger Auditing and Concealment}\label{sec:concealment}

\subsection{The agent's response}\label{sec:conditional}

Fix a type and channel, suppress their indices, and write $t\triangleq pF$ for sanction-weighted audit intensity. For a compact concealment interval $H=[0,\bar h]$ and continuous $c,d$ with $d\geq0$, define
\begin{equation}\label{eq:m}
m(t)\triangleq\min_{h\in H}\{c(h)+td(h)\},\qquad
\mathcal H(t)\triangleq\argmin_{h\in H}\{c(h)+td(h)\}.
\end{equation}
The agent chooses the least costly combination of concealment and expected sanction, and its best payoff from the channel is $g-m(t)$.

\begin{proposition}\label{prop:cs}
\textup{(a)} $m$ is nondecreasing and concave, so $g-m$ is nonincreasing and convex.

\textup{(b)} If $d$ is strictly decreasing, then for any $t'>t$, every $h\in\mathcal H(t)$ and $h'\in\mathcal H(t')$ satisfy $h'\geq h$. If $c$ is nondecreasing, conditional private concealment expenditure is also nondecreasing.
\end{proposition}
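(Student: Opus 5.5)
Part (a) follows from writing $m$ as a lower envelope of affine functions. For each fixed $h\in H$, the map $t\mapsto c(h)+t\,d(h)$ is affine in $t$ with slope $d(h)\geq 0$, so it is nondecreasing. The function $m$ is the pointwise infimum of this family over $h\in H$. A pointwise infimum of nondecreasing functions is nondecreasing: if $t'>t$, then $c(h)+t'd(h)\geq c(h)+td(h)\geq m(t)$ for every $h$, and taking the minimum over $h$ gives $m(t')\geq m(t)$. A pointwise infimum of affine functions is concave: for $t_\lambda=\lambda t_1+(1-\lambda)t_2$ and any $h$, we have $c(h)+t_\lambda d(h)=\lambda[c(h)+t_1d(h)]+(1-\lambda)[c(h)+t_2d(h)]\geq\lambda m(t_1)+(1-\lambda)m(t_2)$, and minimizing over $h$ gives concavity. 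Compactness of $H$ and continuity of $c,d$ ensure that the minimum is attained, so $m$ is finite and $\mathcal H(t)$ is nonempty; no differentiability is used. The statement about $g-m$ is then immediate.

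For the first claim in part (b), I would use a revealed-preference (monotone comparative statics) argument rather than first-order conditions. Take $t'>t$, $h\in\mathcal H(t)$ and $h'\in\mathcal H(t')$. Optimality of each choice at its own intensity gives
\begin{equation*}
c(h)+td(h)\leq c(h')+td(h'),\qquad c(h')+t'd(h')\leq c(h)+t'd(h).
\end{equation*}
Adding these two inequalities cancels the cost terms and yields $(t'-t)\,[d(h')-d(h)]\leq 0$. Since $t'>t$, this gives $d(h')\leq d(h)$. Because $d$ is strictly decreasing on the interval $H$, $h'<h$ would imply $d(h')>d(h)$, a contradiction, so $h'\geq h$. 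The one point to handle carefully is that this conclusion holds for every pair of selections, not just some pair. That is fine, because the inequalities hold for any optimizers, and strict monotonicity of $d$ rules out ties in which $d(h')=d(h)$ but $h'<h$.

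For the second claim in part (b), concealment expenditure at intensity $t$ is $c(h)$ for $h\in\mathcal H(t)$. If $c$ is nondecreasing, then $h'\geq h$ implies $c(h')\geq c(h)$ for every pair of selections, which establishes the claim. Alternatively, without using monotonicity of $c$, the first inequality rearranges to $c(h')-c(h)\geq t[d(h)-d(h')]\geq 0$, which already gives nondecreasing expenditure. I would record this as a remark, because it shows the result is robust. I expect no real obstacle here. The only subtlety is making sure that the comparison covers arbitrary elements of the argmin sets, and that concavity in (a) is stated on $[0,\infty)$, the domain of $t=pF$.
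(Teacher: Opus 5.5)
Your proof is correct and follows the paper's argument. Part (a) is the same lower-envelope argument. In part (b), adding the two optimality inequalities is exactly the strict-increasing-differences comparison that the paper packages as Lemma~\ref{lem:oversight} and reuses for Proposition~\ref{prop:riskcs}. Your closing remark, $c(h')-c(h)\geq t[d(h)-d(h')]\geq0$, is also valid: it shows that expenditure is nondecreasing across all optimal selections without assuming that $c$ is nondecreasing, or even that $d$ is strictly decreasing.
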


\noindent Proposition~\ref{prop:cs} says that stronger auditing makes misconduct less attractive but, for an agent that violates anyway, more concealed. For each concealment choice, the cost $c(h)+td(h)$ is linear in audit intensity, with intercept equal to its private cost and slope equal to its detection probability, and $m$ is the lower envelope of these lines. Open misconduct has the lowest intercept but the steepest slope, so as $t$ rises the agent moves along the envelope to flatter lines. In other words, the ability to adapt weakens the incremental deterrent effect of auditing. \citet{sanchirico2006} makes the corresponding argument for sanctions. Here it follows from the envelope without differentiability, and Lemma~\ref{lem:oversight} in the appendix shows that any instrument that raises the payoff advantage of more concealment has the same effect. More concealment among agents that continue to violate is therefore consistent with oversight working, and detection rates alone cannot identify its effect: pressure from a monitor of written reasoning reduced cheating while making the remaining cheating harder to detect \citep{baker2025}.

\subsection{An exponential example}\label{sec:example}

Let concealment cost rise linearly and detection fall exponentially:
\[
c(h)=\kappa h,\qquad d(h)=e^{-\alpha h},\qquad
h\geq0,\quad\kappa>0,\quad\alpha>0.
\]
A cost that tends to infinity replaces compactness. Write $t_C\triangleq\kappa/\alpha$ and let $g>0$. The unique conditional concealment choice and minimized cost are
\begin{equation}\label{eq:hstar}
h^*(t)=\begin{cases}0, & 0\leq t\leq t_C,\\[0.2em]
\alpha^{-1}\log(t/t_C), & t>t_C,\end{cases}
\qquad
m(t)=\begin{cases}t, & 0\leq t\leq t_C,\\[0.2em]
t_C[1+\log(t/t_C)], & t>t_C,\end{cases}
\end{equation}
and the intensity at which misconduct is weakly deterred, $m(t_D)=g$, is
\begin{equation}\label{eq:tD}
t_D=\begin{cases}g, & 0<g\leq t_C,\\[0.2em]
t_C\exp(g/t_C-1), & g>t_C.\end{cases}
\end{equation}
The derivation is in Appendix~\ref{sec:proofs}. If $g>t_C$, the agent prefers open misconduct for $t\leq t_C$, concealed misconduct for $t_C<t<t_D$, and authorized behavior for $t>t_D$. The intermediate region is a failure relevant to the incident: an anticipated process check can discourage an open unauthorized submission without making stopping preferable, and an agent that already has the flag can then spend further effort making its acquisition appear legitimate.

Three features matter for what follows. First, concealment can raise the required audit capacity: once concealment is worthwhile, the requirement grows exponentially in $g/t_C$. In Figure~\ref{fig:example}, with $g=0.2$, $\kappa=0.1$, and $\alpha=F=1$, an audit probability of $0.2$ would weakly deter open misconduct, but the agent conceals once the probability exceeds $p_C\triangleq t_C/F=0.1$, and weak deterrence requires $p_D\triangleq t_D/F=0.1e\simeq0.272$. Second, on the branch of positive concealment $td(h^*)=t_C$, so the probability of catching an agent that continues to violate, $pd(h^*)=t_C/F$, does not change with the audit probability. Third, deterrence depends on concealment being privately costly. If $\kappa=0$, no finite audit intensity deters misconduct, because concealment can drive the expected sanction toward zero at no cost.

\begin{figure}[!t]
\centering
\begin{tikzpicture}
\begin{axis}[
  width=0.92\textwidth, height=0.52\textwidth,
  xmin=0, xmax=0.4, ymin=-0.045, ymax=0.215,
  xtick={0,0.05,0.1,0.15,0.2,0.25,0.3,0.35,0.4},
  ytick={0,0.1,0.2},
  scaled ticks=false,
  xticklabel style={/pgf/number format/fixed,
    /pgf/number format/precision=2},
  yticklabel style={/pgf/number format/fixed},
  tick label style={font=\small},
  label style={font=\small},
  xlabel={Audit probability $p$},
  ylabel={Incremental misconduct payoff},
  axis x line=bottom, axis y line=left,
  grid=major, grid style={gray!20},
  legend style={draw=none, fill=white, font=\small,
    at={(0.98,0.98)}, anchor=north east},
  legend cell align=left,
]
\addplot[blue!75!black, thick, domain=0:0.1, samples=2, forget plot] {0.2-x};
\addplot[blue!75!black, thick, domain=0.1:0.4, samples=150]
  {0.2-0.1*(1+ln(x/0.1))};
\addlegendentry{With optimal concealment}
\addplot[gray, dashed, domain=0:0.245, samples=2] {0.2-x};
\addlegendentry{Without concealment}
\addplot[black, domain=0:0.4, samples=2] {0};
\addlegendentry{Authorized alternative}
\draw[densely dotted] (axis cs:0.1,-0.045) -- (axis cs:0.1,0.1);
\draw[densely dotted] (axis cs:0.271828,-0.045) -- (axis cs:0.271828,0);
\node[font=\scriptsize, anchor=north, fill=white, inner sep=1.5pt]
  at (axis cs:0.1,-0.008) {$p_C=0.1$};
\node[font=\scriptsize, anchor=south west, inner sep=1.5pt]
  at (axis cs:0.2745,0.003) {$p_D\simeq0.272$};
\end{axis}
\end{tikzpicture}
\caption{Concealment postpones deterrence. \footnotesize Parameters are $g=0.2$, $\kappa=0.1$, $\alpha=F=1$. The solid curve is the best unauthorized payoff, including privately optimal concealment even where that payoff is negative. Actual misconduct stops above $p_D$, and at $p_D$ it ties authorized behavior. Values are illustrative, not estimates from the incident.}
\label{fig:example}
\end{figure}

\subsection{Detection and harm before deterrence}\label{sec:detection-harm}

Auditing raises the probability of inspection but induces concealment, and the second response can dominate. Concealment can also damage other parties. In the incident, seeking credentials to learn how a scorer worked could itself expose third-party systems. Let a violation cause harm $D\geq0$ and concealment add harm $L(h)\geq0$. These losses are borne by others, incurred before inspection, and not undone by detection.

\begin{proposition}\label{prop:detectionharm}
Fix $F>0$ and an open interval $I\subset(0,F)$ of intensities $t=pF$. Suppose $c,d$ are twice continuously differentiable and the unique global minimizer $h^*(t)$ follows a continuously differentiable interior branch on $I$, with $d'(h^*)<0$ and $c''(h^*)+td''(h^*)>0$. Define $M(x)\triangleq m(e^x)$ and the catch rate $q(p)\triangleq p\,d(h^*(pF))$.

\textup{(a)} The marginal deterrent cost and the catch rate satisfy
\begin{equation}\label{eq:detectionharm}
m'(t)=d(h^*(t)),\qquad
q(p)=\frac{tm'(t)}F=\frac{M'(\log t)}F,\qquad
\frac{dq}{dp}=\frac{M''(\log t)}t.
\end{equation}
Thus the catch rate is nondecreasing, constant, or nonincreasing exactly when $M$ is convex, affine, or concave, respectively, on $\log I$.

\textup{(b)} Consider one type and one unauthorized channel, with misconduct gain $g$ and harm $D+L(h)$, where $D\geq0$ and $L$ is differentiable, nonnegative, and nondecreasing. The agent does not internalize this harm, which is incurred before the audit and not reversed by detection. Authorized behavior yields zero payoff and harm. Wherever $g>m(pF)$, actual harm weakly increases with auditing, and increases strictly if $L'(h^*)>0$, while the best misconduct payoff weakly decreases. Whenever $m(pF)>g$, actual harm is zero. At equality, the agent can be indifferent, and harm need not be unique.
\end{proposition}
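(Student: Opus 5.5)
For part (a), I will use the envelope theorem on the smooth interior branch. Since $h^*(t)$ is the unique global minimizer and is continuously differentiable on $I$, we have $m(t)=c(h^*(t))+td(h^*(t))$. Differentiating gives $m'(t)=[c'(h^*)+td'(h^*)]\,h^{*\prime}(t)+d(h^*(t))$. The bracket vanishes by the interior first-order condition, so $m'(t)=d(h^*(t))$. The second-order condition $c''+td''>0$, together with the implicit function theorem, is what guarantees the branch is $C^1$. Differentiating the first-order condition gives $h^{*\prime}(t)=-d'(h^*)/(c''+td'')>0$. Hence $m'$ is itself $C^1$ on $I$, with $m''(t)=d'(h^*)h^{*\prime}(t)$, and this is what I need below.

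Next, the catch rate. With $t=pF$, we get $q(p)=p\,d(h^*(pF))=(t/F)m'(t)$. Put $x=\log t$. Then $M(x)=m(e^x)$ and $M'(x)=e^xm'(e^x)=tm'(t)$, which gives $q=M'(\log t)/F$. The chain rule gives $dq/dp=F\,\frac{d}{dt}[M'(\log t)/F]=M''(\log t)/t$, and $M$ is $C^2$ on $\log I$ because $m$ is $C^2$ there. Since $t>0$, the sign of $dq/dp$ is the sign of $M''$ on $\log I$. A $C^2$ function is convex, affine, or concave on an interval exactly when $M''\ge0$, $M''=0$, or $M''\le0$ there. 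This translates into $q$ being nondecreasing, constant, or nonincreasing on the corresponding $p$-interval $I/F$. I will state the ``exactly when'' in terms of monotonicity on the whole interval, not pointwise, and treat the affine case as both convex and concave.

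For part (b), I split into cases by comparing $g$ with $m(pF)$. If $g>m(pF)$, the agent strictly prefers misconduct with concealment $h^*(pF)$, so realized harm is $D+L(h^*(pF))$. On the interior branch, $h^*$ is nondecreasing in $t$ (shown above, or by Proposition~\ref{prop:cs}(b) whenever $d$ is strictly decreasing). Because $L$ is nondecreasing, harm is weakly increasing in $p$. Its derivative is $L'(h^*)\,h^{*\prime}(t)F$, and this is strictly positive when $L'(h^*)>0$, since $h^{*\prime}>0$ by the formula above using $d'<0$ and the second-order condition. The best misconduct payoff $g-m(pF)$ weakly decreases by Proposition~\ref{prop:cs}(a), or directly because $m'=d\ge0$. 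If $m(pF)>g$, authorized behavior is the strict best response, so harm is zero. At $m(pF)=g$ the agent is indifferent between the authorized action (harm $0$) and misconduct (harm $D+L(h^*)$), so harm depends on tie-breaking and need not be unique.

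The main obstacle is regularity, not any single computation. Part (a) assumes $h^*$ is $C^1$ and interior. For part (b), however, ``wherever $g>m(pF)$'' may range outside $I$, where the minimizer can be a corner or non-unique. Outside $I$, I will use only Proposition~\ref{prop:cs}(b): selections from $\mathcal H$ are monotone when $d$ is strictly decreasing, and $L$ is nondecreasing. This gives weak monotonicity, and the strict claim is confined to the smooth branch.
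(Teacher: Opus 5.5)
Your proposal is correct and follows essentially the same route as the paper: the envelope theorem for $m'(t)=d(h^*(t))$, implicit differentiation of the first-order condition for $dh^*/dt>0$, the chain-rule identities $M'(\log t)=tm'(t)$ and $M''(\log t)=tm'(t)+t^2m''(t)$, and the same case split on $g$ versus $m(pF)$ for part (b). The one caveat concerns your extension of part (b) outside $I$, which needs $d$ strictly decreasing on all of $H$, and that is not among the hypotheses; the paper confines part (b) to the smooth branch on $I$, so the extension is unnecessary.
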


\noindent Proposition~\ref{prop:detectionharm} says that stronger auditing deters only through the evidence that survives concealment. By the envelope theorem, $m'(t)=d(h^*(t))$, and the catch rate falls when concealment reduces the evidence available to an audit proportionately more than inspection increases. With linear concealment cost, along an interior branch,
\begin{equation}\label{eq:logcurv}
\frac{dq}{dp}=\frac{d(h^*)d''(h^*)-d'(h^*)^2}{d''(h^*)},
\end{equation}
so the catch rate is locally rising, flat, or falling as $(\log d)''(h^*)$ is positive, zero, or negative. The exponential technology is log-linear, which is why its catch rate is flat; along the branch of positive concealment, the catch rate falls with $d(h)=e^{-\alpha h^2}$ and rises with $d(h)=(1+h)^{-2}$ (Online Appendix~\ref{oa:detection}). This is the individual-agent counterpart of \citet{langlais2008}, stated in terms of the concealment technology.

Part (b) says that stronger auditing can raise the harm caused by an agent that still violates, because such an agent responds with more concealment. Once misconduct is strictly deterred, both the violation and its concealment disappear. In the exponential example with the parameters of Figure~\ref{fig:example} and harm $1+h$, harm rises from $1$ at $p=0$ to $1+\log2$ at $p=0.2$ and is zero at $p=0.3$. The result does not favor weaker auditing generally, since fewer violations can offset greater harm per continuing violator.

\section{Audit Requirements and Design}\label{sec:deterrence}

\subsection{Covering every deviation}\label{sec:exact}

Each unauthorized action must be evaluated together with its concealment choices. For a fixed type and channel, deterrence requires
\begin{equation}\label{eq:mincover}
\min_{h\in H_i}\{c_i(\theta,h)+p_iF_id_i(\theta,h)\}\geq g_i(\theta),
\end{equation}
and an audit calibrated to open misconduct can fail this condition once the agent adapts. Let $v\geq0$ bound the residual gain from deviation. If a pair $(\theta,h)$ has $d_i(\theta,h)=0$ and $g_i(\theta)-c_i(\theta,h)>v$, set $a_i(v)=+\infty$. Otherwise define
\begin{equation}\label{eq:a}
a_i(v)\triangleq\max\left\{0,\
\sup_{\substack{\theta\in\Theta,\,h\in H_i\\d_i(\theta,h)>0}}
\frac{g_i(\theta)-c_i(\theta,h)-v}{F_id_i(\theta,h)}\right\}.
\end{equation}
The ratio is the audit probability that covers the remaining gain from one deviation, and the supremum protects against every type and concealment choice. When capacity cannot deter every deviation, the provider can minimize the largest remaining incentive to deviate,
\begin{equation}\label{eq:R}
R(p)\triangleq\max\left\{0,\ \max_{1\leq i\leq K}
\max_{\theta\in\Theta,\,h\in H_i}U_i(\theta,h;p_i)\right\}.
\end{equation}

\begin{proposition}\label{prop:exact}
\textup{(a)} For every $v\geq0$ and every $p\in P_B$, all deviation payoffs satisfy $U_i(\theta,h;p_i)\leq v$ if and only if
\begin{equation}\label{eq:req}
p_i\geq a_i(v)\quad\text{for all } i.
\end{equation}
\textup{(b)} Some $p\in P_B$ weakly implements authorized behavior if and only if
\begin{equation}\label{eq:budget}
\sum_{i=1}^K a_i(0)\leq B.
\end{equation}
When feasible, $p_i=a_i(0)$ is the coordinatewise minimal implementing lottery.

\textup{(c)} The minimum $v^*\triangleq\min_{p\in P_B}R(p)$ exists and satisfies
\begin{equation}\label{eq:vstar}
v^*=\min\Big\{v\geq0:\sum_ia_i(v)\leq B\Big\}.
\end{equation}
The lottery $p_i^*=a_i(v^*)$ is optimal. It need not use the whole budget and need not be unique.
\end{proposition}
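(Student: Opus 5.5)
Part (a) is a pointwise equivalence, and parts (b) and (c) follow from it. Fix $v\geq0$, a channel $i$, and $p\in P_B$; the channels decouple because $U_i$ depends only on $p_i$. For a pair $(\theta,h)$ with $d_i(\theta,h)>0$, the inequality $U_i(\theta,h;p_i)\leq v$ rearranges, using $F_i>0$, to
\[
p_i\geq\frac{g_i(\theta)-c_i(\theta,h)-v}{F_id_i(\theta,h)}.
\]
For a pair with $d_i(\theta,h)=0$, it reads $g_i(\theta)-c_i(\theta,h)\leq v$, which does not involve $p_i$. Split into two cases.

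\textbf{Case $a_i(v)=+\infty$.} Some zero-detection pair violates the bound, so no $p_i$ works, and condition \eqref{eq:req} also fails, since $p_i\leq B\leq1<\infty$.

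\textbf{Case $a_i(v)$ finite.} Every zero-detection pair is harmless. Requiring the displayed inequality at every positive-detection pair is then equivalent to $p_i$ exceeding the supremum. Since $p_i\geq0$ anyway, this is equivalent to $p_i\geq a_i(v)$. Taking all $i$ together gives (a). The point to check is that the supremum may be over a noncompact set, where $d_i\downarrow0$; it may even be $+\infty$ without any zero-detection pair. In that case no finite $p_i$ satisfies all the inequalities, and writing ``$p_i\geq+\infty$'' as infeasible keeps the equivalence intact.

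For (b), set $v=0$. Weak implementation means $U_i\leq0$ for all $i,\theta,h$, which by (a) means $p_i\geq a_i(0)$ for all $i$. Such a $p$ exists in $P_B$ iff $\sum_ia_i(0)\leq B$. When this holds, $p_i=a_i(0)$ belongs to $P_B$, and by (a) it is coordinatewise below every implementing lottery.

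For (c), first observe that $R(p)\leq v$ iff $U_i\leq v$ for all deviations, which by (a) holds iff $p\geq a(v)$ coordinatewise; here we use $v\geq0$ to absorb the outer $\max\{0,\cdot\}$. Hence
\[
\{v\geq0:\exists p\in P_B,\ R(p)\leq v\}=\Big\{v\geq0:\sum_ia_i(v)\leq B\Big\}=:S.
\]
Next, $S$ is nonempty. Compactness of $\Theta\times H_i$ and continuity give a finite bound $\bar v$ on $g_i-c_i$, and for $v\geq\bar v$ every $a_i(v)=0$.

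The main obstacle is showing that the infimum of $S$ is attained, that is, that $S$ is closed. Two facts do this. First, $a_i$ is nonincreasing in $v$. Second, $a_i$ is right-continuous, including at the boundary between infinite and finite values. I would argue the second fact as follows. Let $v_n\downarrow v$ with $\sum_ia_i(v_n)\leq B$. Then for each fixed positive-detection pair, the ratio at $v$ is the limit of the ratios at $v_n$, each of which is at most $a_i(v_n)\leq\lim_n a_i(v_n)$. For each zero-detection pair, $g_i-c_i\leq v_n$ for all $n$ passes to the limit $g_i-c_i\leq v$. Hence $a_i(v)\leq\lim_na_i(v_n)$, and summing gives $v\in S$.

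So $v^*=\min S$ exists. The lottery $p^*=a(v^*)$ lies in $P_B$ and gives $R(p^*)\leq v^*$. No $p\in P_B$ can do better, because $R(p)\in S$ for every $p$ in $P_B$, and so $v^*=\min_pR(p)$. The sum $\sum_ia_i(v^*)$ may fall short of $B$, for example when $v^*=0$ with slack, or because of a jump in some $a_i$. Adding mass to any coordinate preserves optimality, which shows non-uniqueness; a one-channel example where $a_1(0)<B$ illustrates both remarks.
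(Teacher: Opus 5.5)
Your proof is correct. Parts (a) and (b) follow the paper's argument essentially step for step. The difference is in (c).

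\textbf{How the paper proves (c).} It obtains existence of $\min_{p\in P_B}R(p)$ from a uniform Lipschitz bound, $|R(p)-R(p')|\leq(\max_iF_i)\|p-p'\|_\infty$, together with compactness of $P_B$. Part (a) identifies $\{v\geq0:\exists p\in P_B,\ R(p)\leq v\}$ with $\{v\geq0:\sum_ia_i(v)\leq B\}$. That set is therefore $[\min R,\infty)$, and its least element is immediate.

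\textbf{How you prove (c).} You go the other way. You show the set $S$ is nonempty and closed by proving that each $a_i$ is nonincreasing and right-continuous in the extended sense. You then deduce that $\min R$ exists and equals $\min S$, using $R(p)\in S$ for every $p\in P_B$ and $R(a(v^*))\leq v^*$.

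\textbf{What each route buys.} The paper's route is shorter. It never has to study $a_i$ as a function of $v$, and in particular avoids the passage between infinite and finite values that your limit argument must handle. Your argument handles that passage correctly, for both zero-detection pairs and positive-detection ratios. Your route yields a structural fact the paper does not state, the right-continuity of the audit requirement in the residual gain. It also checks nonemptiness of $S$ explicitly. Finally, it supplies a one-channel example for the ``need not use the whole budget'' and ``need not be unique'' remarks, which the paper asserts without proof.

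\textbf{A labeling slip in (a).} $a_i(v)=+\infty$ does not imply that some zero-detection pair violates the bound, because the ratio supremum can be infinite with no such pair. You treat that case in the following paragraph, so nothing is missing. The case split should be ``obstruction present'' versus ``obstruction absent,'' not ``infinite'' versus ``finite.''
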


\noindent Proposition~\ref{prop:exact} says that audit requirements must be computed after adaptation. Without concealment and with positive type-independent detection, write $\bar g_i\triangleq\max_\theta g_i(\theta)>0$. The requirement is then $a_i(0)=\bar g_i/(F_id_i)$: expected sanctions must cover the largest private gain on each channel, as in \citet{becker1968}. Concealment changes which deviation binds, so a requirement calibrated to detection before concealment is only a lower bound. The inputs are gains and detection after concealment, not counts of observed attempts, so a rarely observed route may warrant more auditing than a frequently recorded tactic, and an answer check that supplies no evidence of how the answer was obtained cannot cover a positive net gain however often it runs. When capacity falls short, part (c) reduces the largest residual incentives to a common level.\footnote{Strict deterrence follows if detection has a positive floor, $d_i\geq\underline d_i>0$, and $\sum_ia_i(0)<B$: choosing $p_i=a_i(0)+\varepsilon_0$ with $\varepsilon_0\triangleq[B-\sum_ia_i(0)]/(2K)$ gives $U_i(\theta,h;p_i)\leq-\varepsilon_0F_i\underline d_i<0$. Without the floor, spare capacity can be insufficient: if $d_i=0$ and $g_i=c_i$, misconduct ties the authorized action at every audit probability. Minimizing $R$ is not a complete welfare criterion, and by Proposition~\ref{prop:detectionharm}(b), reducing $R$ short of zero need not reduce harm.}

\subsection{Unrestricted sanctions}\label{sec:unrestricted}

So far the sanction $F_i$ has been fixed. For an AI agent in training, however, the sanction is a deduction from an objective the provider writes, so the wealth and participation constraints that limit the substitution of sanctions for inspection \citep{becker1968} \textit{need not} bind. Suppose, then, that the provider can choose any $F_i\in(0,\infty)$. Define channel $i$'s \emph{deterrence threshold} as
\begin{equation}\label{eq:threshold}
\bar t_i\triangleq\max\left\{0,\
\sup_{\substack{\theta\in\Theta,\,h\in H_i\\d_i(\theta,h)>0}}
\frac{g_i(\theta)-c_i(\theta,h)}{d_i(\theta,h)}\right\},
\end{equation}
with $\bar t_i\triangleq+\infty$ if some $(\theta,h)$ has $d_i(\theta,h)=0$ and $g_i(\theta)-c_i(\theta,h)>0$. It is the smallest sanction-weighted intensity $p_iF_i$ that deters every deviation on channel $i$. In the exponential example, $\bar t=t_D$.

\begin{proposition}\label{prop:unrestricted}
Suppose each $F_i$ can be chosen freely in $(0,\infty)$.

\textup{(a)} For every sanction vector, $a_i(0)=\bar t_i/F_i$.

\textup{(b)} If $\bar t_i<\infty$ for every $i$, then for every $B>0$ some $p\in P_B$ weakly implements authorized behavior with finite sanctions; any $F_i\geq K\bar t_i/B$ suffices.

\textup{(c)} If $\bar t_i=\infty$ for some $i$, no finite sanctions and no audit capacity $B\in[0,1]$ weakly implement authorized behavior.
\end{proposition}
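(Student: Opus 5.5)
Part (a) is a direct comparison of definitions \eqref{eq:a} at $v=0$ and \eqref{eq:threshold}, and the plan is to treat the finite and infinite cases separately. In the infinite case, $\bar t_i=+\infty$ exactly when some $(\theta,h)$ has $d_i=0$ and $g_i-c_i>0$. That is precisely the condition giving $a_i(0)=+\infty$, so the identity holds with the convention $\infty/F_i=\infty$. Otherwise, since $F_i>0$ is a constant, it factors out of the supremum over pairs with $d_i>0$: the supremum of $(g_i-c_i)/(F_id_i)$ equals $F_i^{-1}$ times the supremum of $(g_i-c_i)/d_i$. Multiplying by a positive constant commutes with $\max\{0,\cdot\}$, so $a_i(0)=\bar t_i/F_i$.

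For part (b), assume every $\bar t_i<\infty$ and fix $B>0$. The plan is to choose $F_i\geq K\bar t_i/B$, which is finite and can be taken positive even when $\bar t_i=0$. By (a), $a_i(0)=\bar t_i/F_i\leq B/K$, so $\sum_i a_i(0)\leq B$. Proposition~\ref{prop:exact}(b) then yields weak implementation via $p_i=a_i(0)$, which lies in $P_B$ because each $a_i(0)\geq0$ and they sum to at most $B$.

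For part (c), suppose $\bar t_i=\infty$ through a pair $(\theta,h)$ with $d_i(\theta,h)=0$ and $g_i(\theta)-c_i(\theta,h)>0$. Then for any $F_i$ and $p_i$, \eqref{eq:U} gives $U_i(\theta,h;p_i)=g_i-c_i>0$, so this deviation strictly beats authorized behavior and weak implementation fails. The main obstacle is checking that this is the only way $\bar t_i$ can be infinite. The supremum in \eqref{eq:threshold} could also diverge without any pair attaining $d_i=0$. I would rule this out using compactness of $\Theta\times H_i$ and continuity. If $(g_i-c_i)/d_i\to\infty$ along a sequence, then because the numerator is bounded, $d_i\to0$. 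Passing to a convergent subsequence gives a limit pair with $d_i=0$ and, by continuity, $g_i-c_i\geq0$. If $g_i-c_i>0$ at the limit, we are back in the attained case. If it equals zero, I would instead argue directly that for any finite $t=p_iF_i$ there is a nearby pair where $g_i-c_i-t d_i>0$, since the ratio exceeds $t$ there. In both cases no finite $t$, and hence no finite sanction and no $B\in[0,1]$, deters every deviation.

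The definition in the paper may intend the unattained divergence simply to count as $\sup=\infty$. In that case this direct argument already covers it: for every finite $F_i$ and $p_i\leq1$, pick a pair whose ratio exceeds $p_iF_i$, giving $U_i>0$.
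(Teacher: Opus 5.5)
Your proof is correct and follows the paper's own argument. In (a) you factor the positive constant $F_i$ out of the supremum, with the zero-detection convention matching on both sides. In (b) you take $F_i\geq K\bar t_i/B$ and invoke Proposition~\ref{prop:exact}(b). In (c) you exhibit, for each finite $F_i$, a deviation with either $d_i=0<g_i-c_i$ or $g_i-c_i>p_iF_id_i$.

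The compactness detour in (c) is unnecessary. A supremum that diverges without any pair attaining the zero-detection obstruction is a genuine possibility (for instance when $G_i^0=0$), so it should not be ruled out. Your closing direct argument already covers that case, just as the paper's does via $U_i(\theta,h;1)>0$ and monotonicity in $p_i$.
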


\noindent Proposition~\ref{prop:unrestricted} says that when sanctions can be scaled, audit capacity ceases to bind, and whether every type can be deterred depends only on gains, concealment costs, and the evidence that survives concealment. The threshold is finite when every profitable plan leaves evidence that an inspection can find and concealment cannot drive detection toward zero faster than it erodes the net gain. When it is infinite, a larger sanction can still deter particular types, but not every type. For a type that continues to violate, a larger sanction weakly raises concealment when detection falls with concealment and, under the conditions of Proposition~\ref{prop:detectionharm}, weakly raises harm when concealment is harmful. Avoidance costs \citep{malik1990} and misperceived probabilities \citep{bebchuk1992} can make maximal fines suboptimal; here, when the threshold is infinite, no sanction, however large, deters every type. The conditions for an infinite threshold are natural for an AI agent. Concealment that uses compute, time, or access paid for by the provider has little private cost, and records that the agent can edit allow detection to be driven to zero. In the exponential example, $\bar t=\infty$ exactly when $\kappa=0$, and any cap on concealment makes it finite. Whether a large penalty in training changes behavior in deployment is a separate empirical question. Proposition~\ref{prop:unrestricted} shows that even when it does, it cannot substitute for evidence.

Because the threshold depends only on gains, concealment costs, and detection, neither the audit probability nor the sanction can make an infinite threshold finite. Which instruments can do so depends on the largest net gain from a deviation that leaves no evidence,
\begin{equation}\label{eq:G0}
G_i^0\triangleq\max\big\{g_i(\theta)-c_i(\theta,h):\theta\in\Theta,\ h\in H_i,\ d_i(\theta,h)=0\big\},
\end{equation}
with $G_i^0\triangleq-\infty$ when every plan leaves some evidence. The maximum exists because the pairs with zero detection form a compact set.

\begin{corollary}\label{cor:levers}
\textup{(a)} If $G_i^0>0$, then $\bar t_i=\infty$. If $G_i^0<0$, then $\bar t_i<\infty$.

\textup{(b)} Suppose $G_i^0>0$. Each of the following makes $\bar t_i$ finite: \textup{(i)} raising the value of the authorized action by a constant $\delta>G_i^0$ for every type; \textup{(ii)} replacing $c_i$ by a continuous $c_i'\geq c_i$ with $c_i'(\theta,0)=0$ and $c_i'-c_i>G_i^0$ wherever $d_i=0$; \textup{(iii)} replacing $H_i$ by a compact $H_i'\subseteq H_i$ that contains $0$ and on which $d_i>0$ for every type. Changes \textup{(ii)} and \textup{(iii)} are available only if $d_i(\theta,0)>0$ for every $\theta$.
\end{corollary}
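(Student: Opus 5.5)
Part (a) is a compactness argument on the ratio in \eqref{eq:threshold}. If $G_i^0>0$, some pair $(\theta_0,h_0)$ has $d_i(\theta_0,h_0)=0$ and $g_i(\theta_0)-c_i(\theta_0,h_0)>0$, so $\bar t_i=\infty$ by the convention following \eqref{eq:threshold}.

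Now suppose $G_i^0<0$. The set $Z=\{(\theta,h):d_i=0\}$ is compact, and on it the continuous function $g_i-c_i$ is at most $G_i^0<0$. By continuity and compactness of $\Theta\times H_i$, there is an open neighbourhood $N\supseteq Z$ on which $g_i-c_i<G_i^0/2<0$. On $N$ every ratio with $d_i>0$ is negative, so these pairs contribute nothing above the floor $0$. The complement $(\Theta\times H_i)\setminus N$ is compact and disjoint from $Z$, so $d_i$ attains a positive minimum $\underline d>0$ there. Since $g_i-c_i$ is bounded above by some $M$, the ratio is at most $\max\{0,M\}/\underline d<\infty$. If $Z$ is empty, take $N=\emptyset$ and the same bound applies. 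The convention for $+\infty$ is never triggered, because no zero-detection pair has positive net gain. Hence $\bar t_i<\infty$.

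For (b), each change leaves $d_i$ unchanged. It only modifies $g_i$, $c_i$ or $H_i$, so by (a) it suffices to show that the new $G_i^0$ is negative.
\begin{itemize}
\item[(i)] Raising $s$ by $\delta$ lowers every $g_i(\theta)$ by $\delta$. The zero-detection set is unchanged, so the new value is $G_i^0-\delta<0$. The modified $g_i$ stays continuous.
\item[(ii)] On the zero-detection set, $g_i-c_i'<g_i-c_i-G_i^0\leq 0$. The maximum over this compact set is attained, so the new $G_i^0$ is strictly negative. The requirements that $c_i'$ be continuous and that $c_i'(\theta,0)=0$ keep the primitives admissible under \eqref{eq:tech}.
\item[(iii)] If $d_i>0$ on $\Theta\times H_i'$, the zero-detection set is empty, so $G_i^0=-\infty<0$. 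The compactness of $H_i'$ and the inclusion $0\in H_i'$ keep the model well defined.
\end{itemize}

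For the final claim, suppose $d_i(\theta,0)=0$ for some $\theta$. Then $c_i'(\theta,0)=c_i(\theta,0)=0$, so (ii) would require $0>G_i^0$, which contradicts $G_i^0>0$. Likewise, (iii) fails because $0\in H_i'$ has $d_i(\theta,0)=0$. So both changes are available only if $d_i(\theta,0)>0$ for every $\theta$.

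The main obstacle is the second half of (a). Near the zero-detection set the ratio could blow up, and we must show that such blow-up is confined to pairs where the numerator is negative. The neighbourhood-plus-compactness argument handles this.
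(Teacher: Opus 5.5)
Your proof is correct and follows essentially the same route as the paper. For $G_i^0<0$ you separate the zero-detection set from a compact region on which $d_i$ attains a positive minimum; the paper takes that region to be $\{g_i-c_i\geq0\}$, whereas you take the complement of $\{g_i-c_i<G_i^0/2\}$, which is the same idea with a different cutoff (the paper's cutoff at zero avoids your separate case for an empty zero-detection set). For part (b) you reduce each change to making the new $G_i^0$ negative, exactly as the paper does.
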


\noindent Corollary~\ref{cor:levers} says that when some plan leaves no evidence and still pays, deterrence must come from outside the audit. Raising the value of the authorized action, or lowering the payoff from the unauthorized route itself, lowers the gain from every deviation, including those that leave no evidence; on a task with no authorized solution, raising the value of the authorized action means giving credit for stopping. Charging the compute, time, or access that concealment uses against the agent's objective raises the private cost of the plans that erase evidence, and isolating the agent from the systems its concealment would use removes those plans, but both work only if an open violation leaves evidence.\footnote{At $G_i^0=0$ the threshold can be finite or infinite, depending on how fast detection vanishes relative to the net gain as a plan approaches one that leaves no evidence. Compactness of $H_i$ also matters. In the exponential example, whose concealment set is unbounded, with $\kappa=0$ no plan leaves zero evidence, but detection can be driven toward zero at no cost, and the threshold is infinite.}

\subsection{Audits the agent can learn}\label{sec:learnable}

Proposition~\ref{prop:unrestricted} relies on drawing the audit after the agent acts. In the incident, the grader was fixed in advance, and much of the agents' activity was an effort to discover how it worked. Suppose instead that the realization $j\in\{0,1,\ldots,K\}$ is drawn before the agent acts. Evaluator $i$ is selected, $j=i$, with probability $p_i$, and no audit occurs, $j=0$, with probability $p_0\triangleq1-\sum_ip_i$. Before acting, a type-$\theta$ agent can learn $j$ at private cost $\phi(\theta)\geq0$. Write $g^*(\theta)\triangleq\max_ig_i(\theta)$ for the type's largest gain and
\[
m_i(\theta,t)\triangleq\min_{h\in H_i}\{c_i(\theta,h)+t\,d_i(\theta,h)\}
\]
for its minimized cost of misconduct on channel $i$. An agent that learns that no audit will occur deviates openly on its best channel. An agent that learns that evaluator $j$ is selected can deviate openly on another channel or conceal a deviation on channel $j$ against a certain audit.

\begin{proposition}\label{prop:learnable}
Suppose the audit is drawn before the agent acts and a type-$\theta$ agent can learn the realization at private cost $\phi(\theta)\geq0$. A lottery $p\in P_B$ and sanctions $F$ weakly implement authorized behavior if and only if $p_i\geq a_i(0)$ for every $i$ and, for every type,
\begin{equation}\label{eq:learn}
p_0\max\{0,g^*(\theta)\}+\sum_{j=1}^Kp_j\max\Big\{0,\ \max_{i\neq j}g_i(\theta),\ g_j(\theta)-m_j(\theta,F_j)\Big\}\leq\phi(\theta),
\end{equation}
where a maximum over an empty set is $-\infty$.

\textup{(a)} Every type with $g^*(\theta)>0$ requires $\sum_ip_i\geq1-\phi(\theta)/g^*(\theta)$, whatever the sanctions.

\textup{(b)} If some type has $g_i(\theta)>\phi(\theta)$ on at least two channels, no lottery and no finite sanctions weakly implement authorized behavior.

\textup{(c)} Suppose instead that the agent cannot learn the realization but observes, at no cost, a signal $\sigma$ from a finite set $\mathcal S$ whose distribution depends on the realization but not on the type, and let $p_i(\sigma)$ be the posterior probability that evaluator $i$ is selected. If every $\bar t_i$ is finite and sanctions can be chosen freely, some finite sanctions weakly implement authorized behavior if and only if $p_i(\sigma)>0$ for every channel with $\bar t_i>0$ and every signal with positive probability.
\end{proposition}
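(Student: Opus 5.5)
The plan is to write the agent's problem as a two-stage choice and reduce weak implementation to two families of inequalities. The agent either acts uninformed or pays $\phi(\theta)$ to learn $j$ first. If it stays uninformed, the environment is exactly the one of Section~\ref{sec:exact}: the draw is unknown when it acts, and evaluator $i$ is selected with probability $p_i$. Proposition~\ref{prop:exact}(a) with $v=0$ then says the uninformed agent weakly prefers authorized behavior for every type if and only if $p_i\geq a_i(0)$ for all $i$.

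If it learns $j$, its best continuation depends on the realization. For $j=0$ it takes $\max\{0,g^*(\theta)\}$ by deviating openly on its best channel; open deviation is optimal because $c_i\geq 0=c_i(\theta,0)$ and no sanction applies. For $j\geq1$ it can stay authorized, deviate openly on an unaudited channel $i\neq j$ for $g_i(\theta)$, or deviate on channel $j$ against a certain audit for $g_j(\theta)-m_j(\theta,F_j)$. The informed agent's expected value net of authorized behavior is therefore the left side of \eqref{eq:learn} minus $\phi(\theta)$. Since authorized behavior yields the same utility in either branch, authorized behavior is a best response for every type exactly when both the uninformed condition and \eqref{eq:learn} hold. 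One point needs care: a type that learns and then acts authorized is still "authorized behavior," so the comparison is with the no-learning authorized payoff, and learning is never strictly profitable under \eqref{eq:learn}. Here I use the text's reading that the sum of $p_i$ plus $p_0$ equals one.

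Part (a) follows by dropping nonnegative terms: \eqref{eq:learn} implies $p_0 g^*(\theta)\leq\phi(\theta)$, so $1-\sum_ip_i\leq\phi(\theta)/g^*(\theta)$, independent of $F$. For part (b), take channels $i\neq i'$ with $g_i,g_{i'}>\phi$. For each $j\geq1$, at least one of $i,i'$ differs from $j$, so the $j$-th bracket is at least $\min\{g_i,g_{i'}\}>\phi$. The $j=0$ bracket is also at least $g^*>\phi$. Because the weights $p_0,\dots,p_K$ sum to one, the left side exceeds $\phi$. This violates \eqref{eq:learn} for every lottery and every sanction vector.

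Part (c) is where I expect the most care. After observing $\sigma$, the agent faces the problem of Section~\ref{sec:exact} with posterior probabilities $p_i(\sigma)$. Applying Proposition~\ref{prop:exact}(a) signal by signal, deterrence at $\sigma$ holds if and only if $p_i(\sigma)F_i\geq\bar t_i$ for all $i$, using Proposition~\ref{prop:unrestricted}(a). This requires $\bar t_i$ to be a type-uniform threshold, which holds because the signal's distribution does not depend on type.

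\textbf{Sufficiency.} $\mathcal S$ is finite, so $\min_\sigma p_i(\sigma)>0$ over positive-probability signals for each channel with $\bar t_i>0$. Setting $F_i=\bar t_i/\min_\sigma p_i(\sigma)$ works, and any $F_i$ works when $\bar t_i=0$. Finiteness of $\mathcal S$ is the step that guarantees a finite sanction.

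\textbf{Necessity.} If some positive-probability $\sigma$ has $p_i(\sigma)=0$ while $\bar t_i>0$, then by the definition of $\bar t_i$ some type and plan give $g_i-c_i>0$, which is unsanctioned at that signal for any $F$. The main obstacle is checking this when $\bar t_i>0$ arises only as a supremum. Still, $\bar t_i>0$ requires some pair with $g_i-c_i>0$, which suffices. Signals with zero probability are irrelevant because the agent never faces them.
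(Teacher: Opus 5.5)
Your proposal is correct and follows essentially the paper's route: the same split into an uninformed branch (Proposition~\ref{prop:exact}(a) at $v=0$) and an informed branch valued realization by realization, the same dropping of nonnegative terms for (a), the same two-channel covering argument for (b), and, for (c), applying Proposition~\ref{prop:exact}(a) with $B=1$ together with Proposition~\ref{prop:unrestricted}(a), which is equivalent to the paper's Lemma~\ref{lem:threshold} at $t=p_i(\sigma)F_i$. One small correction: $\bar t_i$ is type-uniform by definition, so the type-independence of the signal is needed for something else. It makes the posterior $p_i(\sigma)$ common to all types, and it ensures that the type with $g_i-c_i>0$ observes $\sigma^\circ$ with positive probability; your necessity step relies on this implicitly, whereas the paper states it explicitly.
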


\noindent Proposition~\ref{prop:learnable} says that an audit the agent can learn must be frequent, however large the sanction on the violation. An agent that learns that no audit will occur gains $g^*(\theta)$, so learning is worth at least $p_0g^*(\theta)$, and deterrence requires that value not to exceed the cost of learning. In other words, the rarer the audit, the more an agent gains from learning that it will not occur, and unlike the requirement in Proposition~\ref{prop:unrestricted}, this bound does not fall as sanctions rise. With one channel and $F\geq\bar t$, the requirement is exactly
\begin{equation}\label{eq:learnone}
p\geq\max\Big\{\frac{\bar t}{F},\ \sup_{\theta:\,g(\theta)>0}\Big[1-\frac{\phi(\theta)}{g(\theta)}\Big]\Big\},
\end{equation}
so an agent that can learn the draw at no private cost must be audited with certainty. Part (b) holds because, with separate evaluators, every draw leaves some channel uninspected. Part (c) concerns an agent that can partly tell audited episodes from others, as frontier models often can \citep{needham2025}. A noisy signal raises the required sanction from $\bar t_i/p_i$ to $\bar t_i/\min_\sigma p_i(\sigma)$ but does not prevent deterrence, whereas a signal that rules out the audit of a tempting channel defeats every sanction. An audit drawn after the agent acts, from records it cannot alter, avoids both problems.\footnote{The cost of learning is taken as given. When reconnaissance itself leaves evidence and is sanctioned, its effective cost rises with that sanction, which is a further reason to audit it from records the agent cannot alter.}

\section{Randomized Scoring and Measured Performance}\label{sec:scoring}

An audit lottery randomizes who checks the agent's work. A provider can also randomize what the evaluation rewards, drawing the weights on different dimensions of performance after the agent acts. If evaluator $j$ is drawn with probability $p_j$ and assigns score $W_j(y)$ to a measured outcome $y$, an agent that maximizes its expected score evaluates the lottery as
\begin{equation}\label{eq:invariance}
\sum_jp_jW_j(y),
\end{equation}
which is the score assigned by the deterministic mean rule. A score lottery therefore changes incentives only when the agent's objective values the risk in scores.

\subsection{Balancing tasks when performance is verifiable}\label{sec:opacity}

Consider two dimensions of performance, such as completing the assigned task and avoiding unauthorized effects on other systems. The agent privately knows which is cheaper to improve; by symmetry let that be dimension 1. Real effort $e=(e_1,e_2)\in\R^2_+$ costs
\[
C(e)=\tfrac12(e_1+\lambda e_2)^2,\qquad\lambda>1,
\]
as in \citet{ederer2018}, and, absent manipulation, measured performance equals real effort, $y=e$. After effort is fixed, one of two scores is drawn with equal probability:
\begin{equation}\label{eq:scores}
W_1(y)=\beta(y_1+\mu y_2),\qquad
W_2(y)=\beta(\mu y_1+y_2),\qquad
\beta>0,\quad\mu\in(-1,1/\lambda).
\end{equation}
The scores reverse the relative emphasis on the two tasks. The lower bound on $\mu$ ensures a positive reward for balanced improvement, and the upper bound ensures that evaluator 2 emphasizes the expensive dimension enough to make balancing possible. Equal probabilities and symmetric weights allow the policy to operate without knowing which task the agent finds easier.

Training maximizes an entropic certainty equivalent, the certain score equivalent to the lottery under exponential utility. Its risk parameter $\eta>0$ is a choice of training objective rather than an assumption about the agent's psychology, and risk-sensitive reinforcement learning can optimize such an objective directly \citep{fei2020}; \citet{mihatsch2002} study a related transformation of the feedback used to update a policy during learning. The objective is $\mathrm{CE}_\eta(y)-C(e)$, where
\begin{equation}\label{eq:CEscore}
\begin{aligned}
\mathrm{CE}_\eta(y)
&\triangleq-\frac1\eta\log\left[\tfrac12e^{-\eta W_1(y)}
+\tfrac12e^{-\eta W_2(y)}\right]\\
&=\frac{\beta(1+\mu)}2(y_1+y_2)
-\frac1\eta\log\cosh\left[\frac{\eta\beta(1-\mu)}2(y_1-y_2)\right].
\end{aligned}
\end{equation}
The first term is the expected score. The second deducts a cost of imbalance, because unequal performance produces different scores from the two evaluators. Define
\begin{equation}\label{eq:EDelta}
E\triangleq\frac{\beta(1+\mu)}{1+\lambda},\qquad
\Delta\triangleq\frac{1}{\eta\beta(1-\mu)}
\log\frac{\lambda-\mu}{1-\lambda\mu},
\end{equation}
both of which are positive. When both tasks receive effort, $E$ is the effective-effort index and $\Delta$ is the effort gap.

\begin{lemma}\label{lem:opacity}
Without manipulation, the optimum is unique. If $\Delta<E$, equivalently
\begin{equation}\label{eq:interior}
\eta\beta^2(1-\mu^2)>(1+\lambda)\log\frac{\lambda-\mu}{1-\lambda\mu},
\end{equation}
then both dimensions receive positive effort, with $e_1+\lambda e_2=E$ and $e_1-e_2=\Delta$. The normalized gap $(e_1-e_2)/(e_1+\lambda e_2)=\Delta/E$ then lies in $(0,1)$ and falls to zero as $\eta\to\infty$. If $\Delta\geq E$, then $e_2=0$. Under the transparent symmetric score $(W_1+W_2)/2$, $e_2=0$ and the normalized gap equals one.
\end{lemma}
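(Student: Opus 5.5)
The plan is to change variables so that the objective separates. Write $u\triangleq e_1+\lambda e_2$ for effective effort and $z\triangleq e_1-e_2$ for the effort gap. Then $e_2=(u-z)/(1+\lambda)$ and $e_1=(u+\lambda z)/(1+\lambda)$, so $y_1+y_2=\bigl[2u+(\lambda-1)z\bigr]/(1+\lambda)$. Write $k\triangleq\eta\beta(1-\mu)/2$. Substituting into \eqref{eq:CEscore} and subtracting $C(e)=u^2/2$ gives the objective
\[
\Phi(u,z)=\frac{\beta(1+\mu)}{1+\lambda}\Big[u+\tfrac{\lambda-1}{2}z\Big]-\frac{u^2}{2}-\frac1\eta\log\cosh(kz).
\]
This is a sum of a strictly concave function of $u$ and a strictly concave function of $z$, because $\log\cosh$ is strictly convex and $k>0$ by $\mu<1$. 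The constraint $e\in\R^2_+$ becomes the convex cone $u\geq z$, $u\geq-\lambda z$. The objective tends to $-\infty$ as $u\to\infty$, and on the feasible set $|z|\leq u$ because $\lambda>1$, so a maximizer exists. Strict joint concavity on a convex set then makes the optimum unique.

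Next I will locate the interior stationary point. The first-order condition in $u$ gives $u=E$. The condition in $z$ is
\[
\frac{\beta(1+\mu)(\lambda-1)}{2(1+\lambda)}=\frac{\beta(1-\mu)}{2}\tanh(kz),
\qquad\text{so}\qquad
\tanh(kz)=x\triangleq\frac{(1+\mu)(\lambda-1)}{(1-\mu)(1+\lambda)}.
\]
Here $x\in(0,1)$: positivity uses $\mu>-1$ and $\lambda>1$, and $x<1$ reduces to $\lambda\mu<1$, which is where the upper bound on $\mu$ enters. Using $\operatorname{artanh}x=\tfrac12\log\bigl[(1+x)/(1-x)\bigr]$ and the identities $(1-\mu)(1+\lambda)\pm(1+\mu)(\lambda-1)=2(\lambda-\mu)$ and $2(1-\lambda\mu)$, I obtain $kz=\tfrac12\log\bigl[(\lambda-\mu)/(1-\lambda\mu)\bigr]$, that is, $z=\Delta$. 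This point is feasible with $e_2>0$ and $e_1>0$ exactly when $E>\Delta$. In that case it is the unique optimum, and the relations $e_1+\lambda e_2=E$ and $e_1-e_2=\Delta$ follow. Multiplying $\Delta<E$ through by $\eta\beta(1-\mu)(1+\lambda)$ gives \eqref{eq:interior}. The normalized gap is then $\Delta/E\in(0,1)$, and it tends to zero as $\eta\to\infty$ because $\Delta\propto1/\eta$ while $E$ does not depend on $\eta$.

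The main obstacle is the corner case $\Delta\geq E$, where I need to show that $e_2=0$ and not $e_1=0$. My first step is to note that $\partial\Phi/\partial z>0$ whenever $z\leq\Delta$, since $\tanh(kz)<x$ there; in particular this holds for all $z\leq0$. So at any feasible point with $z<\min\{u,\Delta\}$, raising $z$ with $u$ fixed stays feasible and improves $\Phi$. Hence the optimum has $z=\min\{u,\Delta\}$ for its $u$. If the optimum had $u<\Delta$, it would satisfy $z=u$, which means $e_2=0$. Otherwise $z=\Delta$, and maximizing over $u\geq\Delta$ gives $u=\max\{E,\Delta\}=\Delta$, again $z=u$ and $e_2=0$. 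Either way the corner is $e_2=0$ with $e_1>0$.

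Finally, the transparent symmetric score $(W_1+W_2)/2$ gives the linear reward $\tfrac{\beta(1+\mu)}{2}(y_1+y_2)$. Effort on dimension 2 costs $\lambda>1$ times as much in $u$ for the same reward, so any positive $e_2$ can be shifted to $e_1$ at a strict gain. Thus $e_2=0$ and the normalized gap is one. This also matches the $\eta\to0$ limit of $\Phi$, which is a useful consistency check.
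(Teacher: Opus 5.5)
Your argument is correct, and it reaches the lemma through a different decomposition from the paper's. The paper stays in $(e_1,e_2)$. It gets strict concavity from the directional second derivative $-\chi(\xi_1-\xi_2)^2-(\xi_1+\lambda\xi_2)^2$ and existence from a Jensen bound. It then works with the Kuhn--Tucker conditions \eqref{eq:KT}, written with marginal scores $u_1,u_2$ (not your $u$). Adding and subtracting those conditions gives linear combinations of your two first-order conditions, so the interior computation is the same. The real difference is the corner. The paper rules out $e=0$ because $u_1+u_2>0$. It rules out $e_1=0<e_2$ because the conditions would force $u_2-u_1\geq\lambda(\lambda-1)e_2>0$, hence $y_1>y_2$, a contradiction. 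Your observation is that cost depends only on $u$ and the imbalance penalty only on $z$, so the objective is separable on the cone $-u/\lambda\leq z\leq u$. Concavity and existence are then immediate, and for each $u$ the optimal gap is the projection of $\Delta$ onto that interval. Since $\Delta>0$, a negative gap, and hence $e_1=0<e_2$, never arises, and $\Delta$ reads as the unconstrained optimal gap. The paper's formulation buys reuse: the same marginal-score conditions, and the sign link between $u_2-u_1$ and $y_1-y_2$, carry over directly to manipulation in Proposition~\ref{prop:manipulation} and to the nondifferentiable lower-score objective in Proposition~\ref{prop:objective}.

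Three small repairs are needed. First, $\partial\Phi/\partial z$ is strictly positive only for $z<\Delta$; it vanishes at $z=\Delta$. Second, when $u>\Delta$ the claim $z=\min\{u,\Delta\}$ also needs $\partial\Phi/\partial z<0$ above $\Delta$, where lowering $z$ is feasible because $\Delta>0\geq-u/\lambda$. Third, your claim $e_1>0$ in the case $u<\Delta$ requires $u>0$. This holds because $u\mapsto\Phi(u,u)$ has derivative $\beta(1+\mu)/2>0$ at $u=0$, although the lemma itself asks only for $e_2=0$.
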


\noindent Lemma~\ref{lem:opacity} is the strategic-opacity mechanism of \citet{ederer2018} with risk aversion replaced by the risk sensitivity of a training objective. Under the lottery, neglecting a dimension exposes the agent to a low score whenever the evaluator that emphasizes it is drawn, and the objective penalizes that exposure more as $\eta$ rises. With a human agent, the risk that opacity imposes must be compensated; with an AI agent, $\eta$ is a property of the training rule and the risk carries no compensation. The entropic objective approaches the lowest-score rule as $\eta$ grows and a mean--variance rule for small $\eta$, two conservative objectives used with reward-model ensembles \citep{coste2024,hahami2026}, as Online Appendix~\ref{oa:ensemble} shows. Randomizing scores cannot, however, create an authorized solution to a broken task.

\subsection{Balancing measured rather than real performance}
\label{sec:scoring-concealment}

The cost of imbalance in \eqref{eq:CEscore} rewards improvement in the neglected \emph{measure}. Allow manipulation $h=(h_1,h_2)\in\R^2_+$ so that
\[
y=e+h,\qquad c(h)=\tfrac{\gamma}{2}(h_1^2+h_2^2),\qquad\gamma>0,
\]
where manipulation has no value to the provider. Both evaluators read the same measures, so reweighting leaves shared weaknesses intact \citep{eisenstein2024}. For the Hugging Face application, $e_2$ can represent effort to follow the designated route, while $h_2$ improves only the record presented as evidence of that route. The specification $y_2=e_2+h_2$ requires manipulation actually to improve the measure, and OpenAI's finding that the final logs were unaffected by the observed attempts shows why that assumption must be tested \citep[p.~20]{openai2026b}.

\begin{proposition}\label{prop:manipulation}
With manipulation, the optimum is unique. If $\Delta+(\lambda-1)E/\gamma<E$, then
\[
e_1+\lambda e_2=E,\quad y_1-y_2=\Delta,\quad h_1=\frac{E}{\gamma},\quad
h_2=\frac{\lambda E}{\gamma},\quad e_1-e_2=\Delta+\frac{(\lambda-1)E}{\gamma}.
\]
Otherwise $e_2=0$. In particular, $e_2=0$ for every $\eta$ whenever $\gamma\leq\lambda-1$. Under the transparent symmetric score, $h_1=h_2=\beta(1+\mu)/(2\gamma)$ and $e_2=0$.
\end{proposition}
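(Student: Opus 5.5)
The plan is to reduce everything to the Karush--Kuhn--Tucker conditions of one concave program. I would show that program has a unique solution and then read off the branches. Write $A\triangleq\beta(1+\mu)/2>0$, $B\triangleq\beta(1-\mu)/2>0$, $S\triangleq e_1+\lambda e_2$, $z\triangleq y_1-y_2$ and $\tau\triangleq\tanh(\eta Bz)$. By \eqref{eq:CEscore}, the objective in $(e,h)\in\R^4_+$ is
\[
A(y_1+y_2)-\tfrac1\eta\log\cosh(\eta Bz)-\tfrac12S^2-\tfrac\gamma2(h_1^2+h_2^2),\qquad y=e+h.
\]

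\emph{Existence and uniqueness.} I would first establish existence by coercivity. The score part grows at most linearly in $(e,h)$, while $-\tfrac12S^2-\tfrac\gamma2|h|^2$ is quadratic and $S\geq e_1+e_2$ on $\R^2_+$, so the objective tends to $-\infty$ and a maximizer exists on the closed orthant. For uniqueness I would check strict concavity. Each term is concave, so a direction $(de,dh)$ along which the second variation vanishes must satisfy $dh=0$ (from $c$), $dS=0$ (from $C$) and $dz=0$ (since $\log\cosh$ is strictly convex). Then $de_1-de_2=0$ and $de_1+\lambda de_2=0$, and because $\lambda\neq-1$ this forces $de=0$. The objective is therefore strictly concave, and the KKT conditions characterize the unique optimum.

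\emph{The interior branch.} The first-order conditions are
\[
A-B\tau\leq S\ (=\text{ if } e_1>0),\qquad A+B\tau\leq\lambda S\ (=\text{ if } e_2>0),\qquad \gamma h_1=\max\{0,A-B\tau\},\qquad \gamma h_2=\max\{0,A+B\tau\}.
\]
Suppose both efforts are positive. Adding and subtracting the two effort equalities gives $S=2A/(1+\lambda)=E$ and $B\tau=A(\lambda-1)/(\lambda+1)$. I would then verify that this value of $\tau$ gives exactly $z=\Delta$ in \eqref{eq:EDelta}, using $\operatorname{artanh}x=\tfrac12\log\frac{1+x}{1-x}$. Since $A-B\tau=E>0$ and $A+B\tau=\lambda E$, we get $h_1=E/\gamma$ and $h_2=\lambda E/\gamma$. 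Consequently $e_1-e_2=z-(h_1-h_2)=\Delta+(\lambda-1)E/\gamma$. Solving with $e_1+\lambda e_2=E$ gives $e_2=[E-(e_1-e_2)]/(1+\lambda)$, which is positive exactly under the stated condition, and $e_1>e_2>0$ then follows. Conversely, when the condition holds, this candidate satisfies all the KKT conditions, so by uniqueness it is the optimum.

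\emph{The corner cases.} This is the step I expect to need the most care. When the condition fails, I must rule out $e_2>0$. The case $e_1,e_2>0$ is excluded by the computation above. The remaining possibility is $e_1=0<e_2$. In that case $A+B\tau=\lambda S$ and $A-B\tau\leq S<\lambda S$, which forces $\tau>0$ and hence $z>0$. But $\tau>0$ gives $h_2\geq h_1$, so $z=h_1-h_2-e_2<0$, a contradiction. Hence $e_2=0$. If $\gamma\leq\lambda-1$, then $(\lambda-1)E/\gamma\geq E$, so the condition fails for every $\eta$ and $e_2=0$.

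\emph{The transparent score.} Under $(W_1+W_2)/2$ the score is $A(y_1+y_2)$, which is separable. This gives $h_1=h_2=A/\gamma=\beta(1+\mu)/(2\gamma)$. The effort part, maximizing $A(e_1+e_2)-\tfrac12S^2$, puts all effort on the cheaper dimension because $\lambda>1$, so $e_2=0$. Uniqueness here follows from the same argument with $\eta$-strictness replaced by $S$ pinned down and $e_2=0$ forced by the strict inequality $A<\lambda S$ at the optimum.
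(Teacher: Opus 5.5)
Your proposal is correct and follows essentially the paper's route. Existence and uniqueness come from coercivity plus strict concavity, the interior branch from adding and subtracting the Kuhn--Tucker effort conditions, and the corner $e_1=0<e_2$ from the same contradiction: the effort conditions force $y_1>y_2$, while $h_2\geq h_1$ forces $y_2>y_1$. The only small difference is the transparent-score case. There you use complementary slackness ($A\leq S<\lambda S$), whereas the paper replaces any $e$ with $e_2>0$ by $(e_1+\lambda e_2,0)$, which keeps cost fixed and raises the score; both arguments work.
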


\noindent Proposition~\ref{prop:manipulation} says that the measured gap follows the benchmark formula while the real gap carries an additional term. The agent equates the marginal value of each measure to the marginal cost of improving it through real effort, which is $\lambda$ times higher on dimension 2, so it manipulates dimension 2 $\lambda$ times as much. As risk sensitivity grows, the measured gap vanishes but the real gap does not:
\begin{equation}\label{eq:truegap}
\frac{e_1-e_2}{e_1+\lambda e_2}
=\min\left\{1,\frac{\Delta}{E}+\frac{\lambda-1}{\gamma}\right\}
\longrightarrow\min\left\{1,\frac{\lambda-1}{\gamma}\right\}>0
\quad\text{as }\eta\to\infty,
\end{equation}
while the measured gap falls to zero in every regime (Appendix~\ref{sec:proofs}). A provider can therefore observe convergence to equal measured performance while real performance remains unequal, and if $\gamma\leq\lambda-1$ the expensive dimension receives no real effort at any $\eta$. Randomization redirects manipulation toward the dimension the provider is trying to improve without raising its total, $h_1+h_2=\beta(1+\mu)/\gamma$.\footnote{The invariance of total manipulation depends on linear scores, mirror-image weights, and quadratic costs.} Adding evaluators does not remove the problem when they read the same editable records.

\subsection{Choosing the training objective}\label{sec:choosing-risk}

With an AI agent, the provider can specify sensitivity to score risk as part of the training objective. Let it choose $\eta\in[0,\infty]$, where $\eta=0$ corresponds to the mean score and $\eta=\infty$ to the lower of the two scores, holding the other parameters fixed. Its payoff is
\[
V(\eta)\triangleq e_1(\eta)e_2(\eta)
-\frac{\tau}{2}\big[h_1(\eta)^2+h_2(\eta)^2\big],
\qquad \tau\geq0,
\]
where efforts and manipulation maximize the agent's training objective, which training is assumed to attain at no cost. The product makes the two dimensions complementary, and the second term captures losses that manipulation imposes on the provider.

\begin{proposition}\label{prop:objective}
\textup{(a)} If $\gamma\leq\lambda-1$, the expensive dimension receives no real effort under any $\eta\in[0,\infty]$, and $\eta=0$ is optimal, uniquely if $\tau>0$.

\textup{(b)} If $\gamma>\lambda-1$, let $\eta^*\triangleq\eta^R$ when $\gamma>2\lambda$ and $\eta^*\triangleq\infty$ when $\lambda-1<\gamma\leq2\lambda$, where $\eta^R<\infty$ is given in the proof. There is a threshold $\bar\tau>0$ such that the provider chooses $\eta^*$ if $\tau<\bar\tau$ and $\eta=0$ if $\tau>\bar\tau$, and is indifferent between the two at $\tau=\bar\tau$. When $\gamma>2\lambda$, $\eta^*$ leaves a positive measured gap.
\end{proposition}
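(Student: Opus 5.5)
The plan is to reduce the provider's problem to a one-dimensional comparison along the closed forms of Proposition~\ref{prop:manipulation}. Its payoff splits into an effort term $e_1e_2$ and a manipulation term $\tfrac{\tau}{2}(h_1^2+h_2^2)$, and the two terms pull in opposite directions.

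\textbf{Manipulation term.} I use the invariance $h_1+h_2=\beta(1+\mu)/\gamma$ noted after Proposition~\ref{prop:manipulation}. For fixed total manipulation, $h_1^2+h_2^2$ is uniquely minimized at $h_1=h_2$, which is exactly the transparent case $\eta=0$. For $\eta>0$, I would show from the agent's first-order conditions that $h_1\neq h_2$. The marginal value of measure $i$ is $\partial\mathrm{CE}_\eta/\partial y_i$. The $\log\cosh$ term makes the two marginal values differ whenever $y_1\neq y_2$. Optimality gives $\gamma h_i=\partial\mathrm{CE}_\eta/\partial y_i$. Under the transparent score, $e_2=0$ and $h_1=h_2$, so $y_1>y_2$, and I expect $y_1>y_2$ to persist for $\eta>0$. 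Hence $S(\eta)\triangleq h_1^2+h_2^2>S(0)$ for every $\eta>0$. In the interior regime of Proposition~\ref{prop:manipulation}, $S$ is the constant $S_{\rm int}=(1+\lambda^2)E^2/\gamma^2$. This step also needs the invariance of the total to be checked in the corner regime $e_2=0$, where the closed forms are not given; I would verify it directly from the corner first-order conditions.

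\textbf{Effort term.} In the corner regime $e_1e_2=0$. Together with $S(\eta)\geq S(0)$, this shows that $\eta=0$ weakly dominates every corner $\eta$, and strictly when $\tau>0$. This already proves part (a), since when $\gamma\leq\lambda-1$ Proposition~\ref{prop:manipulation} puts every $\eta$ in the corner. In the interior regime, write $G\triangleq e_1-e_2=\Delta+(\lambda-1)E/\gamma$. Solving with $e_1+\lambda e_2=E$ gives
\[
e_2=\frac{E-G}{1+\lambda},\qquad e_1=\frac{E+\lambda G}{1+\lambda},\qquad
e_1e_2=\frac{(E-G)(E+\lambda G)}{(1+\lambda)^2}.
\]
This is a concave quadratic in $G$, maximized at $G^\circ=E(\lambda-1)/(2\lambda)$. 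As $\eta$ runs from $0$ to $\infty$, $\Delta$ falls continuously and monotonically from $\infty$ to $0$, so $G$ sweeps $\big((\lambda-1)E/\gamma,\infty\big)$. The comparison $G^\circ>(\lambda-1)E/\gamma$ holds exactly when $\gamma>2\lambda$.
\begin{itemize}
\item If $\gamma>2\lambda$, the product is maximized at the finite $\eta^R$ solving $\Delta=E(\lambda-1)/(2\lambda)-(\lambda-1)E/\gamma$. There the measured gap $\Delta$ is positive, which gives the final claim.
\item If $\lambda-1<\gamma\leq2\lambda$, the product is decreasing in $G$ and is maximized in the limit $\eta\to\infty$.
\end{itemize}
Since $S$ is constant across the interior regime, $\eta^*$ maximizes $V$ over all interior $\eta$ for every $\tau$.

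\textbf{Comparison and the main obstacle.} Finally I compare
\[
V(\eta^*)=P^*-\tfrac{\tau}{2}S_{\rm int}\qquad\text{with}\qquad V(0)=-\tfrac{\tau}{2}S(0),
\]
where $P^*>0$ is the maximal product. Both are affine in $\tau$, and $S_{\rm int}>S(0)$ because $\lambda>1$. They therefore cross at the unique threshold $\bar\tau=2P^*/(S_{\rm int}-S(0))>0$, which gives part (b). The main obstacle is the endpoint $\eta=\infty$, the lowest-score rule, which is nondifferentiable. Proposition~\ref{prop:manipulation} is stated for finite $\eta$, so I would solve the min-rule problem directly. I would check that its unique optimum equalizes the two scores, so that $y_1=y_2$, and that it coincides with the $\eta\to\infty$ limit of the interior closed forms, $G=(\lambda-1)E/\gamma$ and $h=(E/\gamma,\lambda E/\gamma)$. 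This continuity is what justifies treating $\eta^*=\infty$ as attained rather than merely approached.
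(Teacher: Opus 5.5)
Your plan follows the paper's proof step for step. The shared steps are:
\begin{itemize}
\item invariance of $h_1+h_2$, so that $\eta=0$ minimizes $h_1^2+h_2^2$;
\item corner choices weakly dominated by $\eta=0$, because $e_1e_2=0$ there;
\item manipulation fixed at $(E/\gamma,\lambda E/\gamma)$ on the interior, so that $\eta$ only selects the real gap;
\item the concave quadratic in that gap, maximized at $(\lambda-1)E/(2\lambda)$, with the case split at $\gamma=2\lambda$;
\item a direct solution of the lower-score problem;
\item the threshold $\bar\tau=2P^*/(S_{\rm int}-S(0))$, which equals the paper's $4\gamma^2P^*/[(\lambda-1)^2E^2]$ because $S_{\rm int}-S(0)=(\lambda-1)^2E^2/(2\gamma^2)$.
\end{itemize}
A small correction: the interior formulas hold only while $e_2>0$, so the real gap ranges over $[(\lambda-1)E/\gamma,E)$, not up to $\infty$. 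Since the maximizer lies below $E$, nothing changes.

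One step does not work as written. To get $h_1\neq h_2$, and hence $S(\eta)>S(0)$, at corner choices, you argue that $y_1>y_2$ persists for every $\eta>0$, so the marginal values differ. That is false at $\eta=\infty$. Write $a=\beta(1+\mu)/2$, $b=\beta(1-\mu)/2$ and $\Gamma=(\gamma+2)/\gamma$. When $\gamma\leq\lambda-1$, the lower-score optimum has $e_2=0$ but sits on the kink. There $y_1=y_2$, $e_1=2a/(\gamma+2)$, and the subgradient weight is $\sigma=a/(\Gamma b)\in(0,1)$. At the kink the sign of $y_1-y_2$ does not determine $u_2-u_1$. Your argument therefore does not deliver $V(\infty)<V(0)$, which is exactly what ``uniquely if $\tau>0$'' in part (a) requires.

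The paper instead argues by contradiction, and this works on all of $(0,\infty]$ at once:
\begin{itemize}
\item suppose $e_2=0$ and $h_1=h_2$;
\item then $y_1-y_2=e_1>0$;
\item this forces $u_2>u_1=e_1>0$, through $\tanh$ for finite $\eta$ and through $\sigma=1$ at $\eta=\infty$;
\item by $\gamma h_i=\max\{0,u_i\}$ this gives $h_2>h_1$, a contradiction.
\end{itemize}

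For finite $\eta$ your expectation is correct. It is proved in the boundary step of the proof of Proposition~\ref{prop:manipulation}, which gives $0<y_1-y_2<a$, so cite it rather than expect it. Your direct treatment of the lower-score rule must also cover this corner case, because Proposition~\ref{prop:manipulation} says nothing about $e_2$ at $\eta=\infty$.
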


\noindent Proposition~\ref{prop:objective} says that a provider free to choose the agent's risk sensitivity targets the real allocation it values rather than equal measured performance. When both dimensions receive effort, manipulation is fixed at $(E/\gamma,\lambda E/\gamma)$, so the choice of $\eta$ selects only the real allocation, and the provider wants effective effort divided equally between the dimensions. When manipulation is costly to the agent ($\gamma>2\lambda$), the lower-score objective would push real effort past that allocation, so the provider deliberately stops short of equal measured performance, much as the committed under-use of manipulable data does in \citet{frankel2022}. Concern about manipulation, $\tau$, governs whether risk-sensitive training is adopted but not the sensitivity chosen. The instrument is the agent's own attitude to risk, although the result does not establish that a training objective governs behavior after deployment. Online Appendix~\ref{oa:min} considers a provider that values only the weaker dimension.

\subsection{Risk sensitivity and concealment of an exploit}

The adaptation in Proposition~\ref{prop:cs} reappears for a discrete exploit. The authorized action pays $s$ under either evaluator. An exploit scores $W_1$ under evaluator 1 and $W_2(h)<W_1$ under evaluator 2, each selected with probability one half, where concealment $h$ lies in the compact interval $H=[0,\bar h]$ and has continuous private cost $c(h)$, and $W_2$ is continuous and strictly increasing. Concealment thus improves the score from the evaluator that would otherwise expose the exploit. The exploit's net certainty equivalent is
\begin{equation}\label{eq:Phi}
\Phi_\eta(h)\triangleq\psi_\eta(W_2(h))-c(h),\qquad
\psi_\eta(u)\triangleq-\frac1\eta
\log\left[\tfrac12e^{-\eta W_1}+\tfrac12e^{-\eta u}\right].
\end{equation}

\begin{proposition}\label{prop:riskcs}
\textup{(a)} The best payoff from the exploit, $\max_{h\in H}\Phi_\eta(h)$, is strictly decreasing in $\eta$.

\textup{(b)} If $\eta'>\eta$, every $h\in\argmax_{h\in H}\Phi_\eta(h)$ and $h'\in\argmax_{h\in H}\Phi_{\eta'}(h)$ satisfy $h'\geq h$.

\textup{(c)} Some finite $\eta$ deters the exploit if and only if $\max_{h\in H}\{W_2(h)-c(h)\}<s$, in which case every sufficiently large $\eta$ deters it strictly.
\end{proposition}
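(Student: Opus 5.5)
Everything rests on three properties of the certainty equivalent $\psi_\eta(u)$, viewed as a function of $\eta$ for fixed $u<W_1$. First, $\psi_\eta(u)$ is the entropic certainty equivalent of a nondegenerate two-point lottery. It is therefore strictly decreasing in $\eta$: this is the standard fact that exponential certainty equivalents fall strictly with risk aversion, which follows from Jensen's inequality applied to the power map $x\mapsto x^{\eta'/\eta}$. Second, $\psi_\eta(u)\to\tfrac12(W_1+u)$ as $\eta\to0$. Third, $\psi_\eta(u)\to\min\{W_1,u\}=u$ as $\eta\to\infty$. I would establish these first, using $W_2(h)<W_1$ for every $h$, so that the lottery is never degenerate.

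\emph{Part (a).} Let $h_{\eta'}$ maximize $\Phi_{\eta'}$; a maximizer exists by compactness and continuity. Strict monotonicity of $\psi$ in $\eta$ at the point $u=W_2(h_{\eta'})$ gives
\[
\max_h\Phi_{\eta'}=\Phi_{\eta'}(h_{\eta'})<\Phi_\eta(h_{\eta'})\le\max_h\Phi_\eta .
\]

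\emph{Part (b).} I would use a single-crossing (Topkis) argument in $(h,\eta)$. I need the marginal value of raising $u$, namely
\[
\partial_u\psi_\eta(u)=\frac{e^{-\eta u}}{e^{-\eta W_1}+e^{-\eta u}}=\frac{1}{1+e^{-\eta(W_1-u)}},
\]
to be increasing in $\eta$. This holds because $W_1-u>0$. Hence $\psi_\eta(u)$ has increasing differences in $(u,\eta)$. Since $W_2$ is strictly increasing, $\psi_\eta(W_2(h))$ has increasing differences in $(h,\eta)$. The cost $c(h)$ does not depend on $\eta$, so $\Phi_\eta(h)$ inherits increasing differences.

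I would then run the usual revealed-preference argument to get the strong-set-order conclusion "every $h'\ge$ every $h$". Suppose $h'<h$ with $h\in\argmax\Phi_\eta$ and $h'\in\argmax\Phi_{\eta'}$. Optimality gives
\[
\Phi_\eta(h)\ge\Phi_\eta(h'),\qquad \Phi_{\eta'}(h')\ge\Phi_{\eta'}(h).
\]
I would need strict increasing differences to reach a contradiction, and this is the step I expect to be the main obstacle. Strictness holds because the derivative above is strictly increasing in $\eta$ whenever $W_1-u>0$, and $W_2(h)>W_2(h')$ strictly. Integrating the derivative over $u\in[W_2(h'),W_2(h)]$ then gives
\[
\Phi_{\eta'}(h)-\Phi_{\eta'}(h')>\Phi_\eta(h)-\Phi_\eta(h').
\]
Adding the two optimality inequalities contradicts this. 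No differentiability of $c$ or $W_2$ is needed, which matches Proposition~\ref{prop:cs}.

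\emph{Part (c).} By part (a), the deterrence value is the limit of $\max_h\Phi_\eta$ as $\eta\to\infty$. I would show this limit equals $\max_h\{W_2(h)-c(h)\}$. The bounds are
\[
W_2(h)\;\le\;\psi_\eta(W_2(h))\;\le\;W_2(h)+\frac{\log2}{\eta},
\]
and they are uniform in $h$. The lower bound holds because $\psi_\eta(u)\ge u$ when $u<W_1$. The upper bound comes from dropping the $e^{-\eta W_1}$ term inside the logarithm. Hence $\max_h\Phi_\eta$ lies in
\[
\Bigl[\max_h\{W_2(h)-c(h)\},\ \max_h\{W_2(h)-c(h)\}+\tfrac{\log2}{\eta}\Bigr],
\]
and it decreases strictly toward the left endpoint.

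If $\max_h\{W_2-c\}<s$, then for every $\eta>\log2/(s-\max_h\{W_2-c\})$ the exploit is strictly worse than $s$, so every sufficiently large $\eta$ deters it strictly. If instead $\max_h\{W_2-c\}\ge s$, then for every finite $\eta$ the bound gives
\[
\max_h\Phi_\eta\;\ge\;\max_h\{W_2-c\}\;\ge\;s .
\]
This lower bound is in fact strict, because $\psi_\eta(u)>u$ when $u<W_1$. So the exploit strictly beats $s$ for every finite $\eta$, and no finite $\eta$ deters it, even weakly.
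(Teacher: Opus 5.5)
Your proposal is correct and follows essentially the same route as the paper: strict Jensen's inequality on the power map $x\mapsto x^{\eta'/\eta}$ for (a); for (b), strict increasing differences from integrating the marginal weight $[1+e^{-\eta(W_1-u)}]^{-1}$, which rises strictly in $\eta$, over $[W_2(h'),W_2(h)]$, combined with the revealed-preference contradiction of Lemma~\ref{lem:oversight}; and the uniform bound $0<\psi_\eta(u)-u<(\log 2)/\eta$ for (c). The limit as $\eta\to0$ that you list among your preliminary facts is never used and can be dropped.
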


\noindent Proposition~\ref{prop:riskcs} says that risk sensitivity acts on an exploit as audit intensity acts on a violation. It lowers the exploit's value but weakly raises its concealment, because the marginal weight on evaluator 2's score, $\psi_\eta'(u)=[1+e^{-\eta(W_1-u)}]^{-1}$, rises with $\eta$ for $u<W_1$, and concealment raises precisely that score. An agent that already holds a correct flag has secured the favorable answer score and can improve only the process score. Part (c) holds because the certainty equivalent approaches the lower score uniformly, $0<\psi_\eta(W_2(h))-W_2(h)<(\log2)/\eta$, so a finite sensitivity deters the exploit exactly when the concealed exploit, evaluated by its less favorable score, falls short of the authorized action. This is the scoring counterpart of an infinite deterrence threshold.

\section{Stopping and Disclosure}\label{sec:stopping}

\subsection{The value of stopping}\label{sec:limits}

Task feasibility makes the authorized alternative consequential. The METR--Redwood investigation reports an estimate by the authors of ExploitGym \citep{wang2026}, the benchmark used in the incident, that roughly 30--40 percent of targets cannot be exploited with the intended vulnerability \citep{greenblatt2026}. OpenAI reports that 198 of the 898 tasks had never received a correct answer from its models and that these tasks accounted for 93 percent of the tasks discussed on the agents' message board \citep{openai2026a}. Prior failure alone does not establish impossibility, but where only completion receives credit, stopping means certain failure.

Suppose authorized stopping gives zero, a successful unauthorized submission yields $g>0$, and detection only forfeits that reward. If cheating has no private cost and total detection probability is $\pi<1$, the attempt is strictly preferred under expected utility and every finite entropic risk parameter $\eta>0$; at $\pi=1$ it ties stopping.\footnote{For a strictly increasing utility function $u$, the expected-utility gain is $(1-\pi)[u(g)-u(0)]>0$ when $\pi<1$. The entropic certainty equivalent is $-\eta^{-1}\log[\pi+(1-\pi)e^{-\eta g}]>0$, because the expression inside the logarithm lies in $(0,1)$. At $\pi=1$ both differences are zero.} When detection only forfeits the success reward, the sanction can be no larger than the reward, which is the constraint that Section~\ref{sec:unrestricted} removes. A privately costly attempt, a loss beyond forfeiture, or valuable authorized stopping changes the comparison.

More generally, raising $s(\theta)$ lowers every $g_i=r_i-s$, so every channel's audit requirement and deterrence threshold weakly fall, and by Corollary~\ref{cor:levers} an increase larger than $G_i^0$ removes the zero-detection obstruction. On the concealed branch of the exponential example, substituting $g=r-s>t_C$ into \eqref{eq:tD} gives $p_D(s)=(t_C/F)\exp[(r-s)/t_C-1]$, so each equal increase in the value of stopping cuts the required audit probability by the same proportion. Credit for a substantiated report that the designated route is unavailable can therefore economize on auditing, which is the legal-alternative logic of \citet{nussim2009}. False reports on solvable tasks then become deviations, but the evidence against them differs from that against the original violation: whether a task can be solved through its designated route can be established without records the agent writes.

\subsection{Incentives after a violation}\label{sec:disclosure}

Once a violation has occurred, its private cost is sunk and cannot deter a cover-up. We ask whether one treatment can deter initiation and induce disclosure after a breach, without a reporting bounty. Consider one type and one opportunity. Authorized stopping gives $s\in\R$. Initiating a violation delivers private benefit $x\geq0$ and incurs private cost $k\geq0$, both then sunk. An agent that discloses and stops receives stopping utility $s_D\leq s$, surrenders a recoverable amount $\rho\in[0,x]$, and pays a fine $\ell\in[0,\ell_{\max}]$, with finite $\ell_{\max}\geq0$. An agent that continues without disclosure receives further benefit $b\geq0$ and chooses concealment, yielding $b-c(h)-pFd(h)$, with $p\in[0,1]$ and $F>0$ the loss on the continuation branch. As before, $H$ is compact and includes the option of no concealment, $c,d$ are continuous and nonnegative, and $m(pF)=\min_h\{c(h)+pFd(h)\}$. The fine must be high enough to make violating and then disclosing unattractive, and low enough to make disclosure attractive after a violation:
\begin{equation}\label{eq:LU}
\underline\ell\triangleq x-k+s_D-\rho-s,\qquad
\overline\ell(p)\triangleq s_D-\rho+m(pF)-b.
\end{equation}

\begin{proposition}\label{prop:disclosure}
A committed fine $\ell$ supports a sequentially optimal strategy of authorized stopping initially and disclosure after an off-path violation if and only if
\begin{equation}\label{eq:fineweak}
\ell\in\big[\max\{0,\underline{\ell}\},\ \min\{\ell_{\max},\overline{\ell}(p)\}\big].
\end{equation}
Such a fine exists exactly when this interval is nonempty. Both desired choices are strictly preferred if and only if
\begin{equation}\label{eq:finestrict}
\ell\in[0,\ell_{\max}]\cap(\underline{\ell},\overline{\ell}(p)).
\end{equation}
The open behavioral bounds in \eqref{eq:finestrict} matter: equality can leave deliberate violation or concealment optimal.
\end{proposition}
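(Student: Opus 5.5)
The plan is to solve the two-stage problem by backward induction, treating the fine $\ell$ as committed before the agent's first choice. Sequential optimality means two things. After a violation, disclosure must be optimal at the continuation node, with $x$ and $k$ sunk. Initially, authorized stopping must be optimal given that the agent correctly anticipates disclosure.

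\emph{Continuation node.} The agent compares disclosing and stopping, which gives $s_D-\rho-\ell$, with continuing undisclosed. The best continuation payoff is $\max_{h\in H}\{b-c(h)-pFd(h)\}=b-m(pF)$. This maximum is attained because $H$ is compact and $c,d$ are continuous. The payoffs $x-k$ are common to both branches and cancel. Disclosure is weakly optimal iff $s_D-\rho-\ell\geq b-m(pF)$, that is, iff $\ell\leq\overline\ell(p)$. It is strictly preferred to every concealment choice iff $\ell<\overline\ell(p)$. This holds because the maximum over $h$ is attained: a strict inequality against the maximum is the same as a strict inequality against every $h$. At equality some $h\in\mathcal H(pF)$ ties, which shows that concealment can remain optimal.

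\emph{Initial node.} Given disclosure at the continuation node, violating yields $x-k+s_D-\rho-\ell$, and stopping yields $s$. Stopping is weakly optimal iff $\ell\geq\underline\ell$ and strictly optimal iff $\ell>\underline\ell$. Both comparisons must be evaluated along the induced strategy, with disclosure off path. Checking stopping against the alternative \emph{violate then conceal} is unnecessary, because that alternative is weakly dominated by \emph{violate then disclose} once the continuation inequality holds.

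Combining the two nodes with feasibility, $\ell\in[0,\ell_{\max}]$, gives \eqref{eq:fineweak} for weak optimality and \eqref{eq:finestrict} for strict optimality. Existence of a suitable fine is then equivalent to nonemptiness of the interval by definition.

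The only delicate step is the closing remark about the open bounds. I would exhibit the ties directly. At $\ell=\underline\ell$, violating and disclosing gives exactly $s$, so deliberate violation is also a best response. At $\ell=\overline\ell(p)$, any minimizer of $c+pFd$ makes concealed continuation a best response. I expect no other obstacle, beyond making sure the attainment of $m$ is invoked so that the strict and weak versions differ precisely at the endpoints.
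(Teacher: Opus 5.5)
Your proposal is correct and matches the paper's proof: backward induction from the disclosure node, where attainment of $b-m(pF)$ separates the weak bound $\ell\leq\overline\ell(p)$ from the strict one, to the initial node, where the anticipated disclosure makes violate-then-conceal redundant, followed by intersection with $[0,\ell_{\max}]$. Your explicit ties at $\ell=\underline\ell$ and $\ell=\overline\ell(p)$ spell out what the paper's closing remark on the open bounds only asserts.
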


\noindent Proposition~\ref{prop:disclosure} says that one committed fine can deter initiation and induce disclosure only if it fits between two bounds. Gains already secured and costs already incurred cancel from the disclosure comparison but enter the initiation comparison. The upper bound extends the self-reporting fine of \citet{kaplow1994}, set at the expected sanction a violator would otherwise face; here that cost is the minimized cost of concealed continuation, so the agent's ability to conceal lowers the largest fine that still induces disclosure. Initial deterrence does not imply disclosure. With $s=s_D=x=\rho=0$, $k=2$, $b=1$, and $m(pF)=1/2$, initiating and continuing gives $-3/2$, so the violation is deterred, but once its cost is sunk, continuing gives $1/2$ whereas disclosing gives $-\ell\leq0$.

The weak fine interval is nonempty exactly when
\begin{equation}\label{eq:monitor}
\underline\ell\leq\ell_{\max}\quad\text{and}\quad
m(pF)\geq\max\{b-s_D+\rho,\ x-k+b-s\},
\end{equation}
because nonemptiness requires $\underline\ell\leq\ell_{\max}$, $\overline\ell(p)\geq0$, and $\overline\ell(p)\geq\underline\ell$. The bound $m(pF)\geq b-s_D+\rho$ makes disclosure at a nonnegative fine attractive, and the bound $m(pF)\geq x-k+b-s$ deters initiation followed by nondisclosure, an option that leniency cannot remove. Raising $s_D$ or $\rho$ shifts both bounds by the same amount, leaving their gap $\overline\ell(p)-\underline\ell=m(pF)-b-x+k+s$ unchanged, so more generous treatment of disclosure cannot repair a negative gap. Leniency that lowers the cost of violating and then reporting is the effect that \citet{motta2003} and \citet{buccirossi2006} identify. With a single agent and no bounty funded by co-offenders \citep{spagnolo2004}, the continuation must instead be made unattractive, either by monitoring it or by revoking access, which lowers $b$.

Monitoring raises $m(pF)$, and when the sanction on the continuation can be scaled, the private cost of concealment limits how far it can rise. Let $H^0\triangleq\{h\in H:d(h)=0\}$ be the continuation plans that leave no evidence, with $c^0\triangleq\min_{h\in H^0}c(h)$, or $c^0\triangleq+\infty$ if $H^0$ is empty, and let $\underline m\triangleq\max\{b-s_D+\rho,\ x-k+b-s\}$ be the right side of the second condition in \eqref{eq:monitor}.

\begin{corollary}\label{cor:continuation}
Let the continuation loss $F$ be chosen freely in $(0,\infty)$.

\textup{(a)} For each $p\in(0,1]$, $m(pF)$ increases to $c^0$ as $F\to\infty$.

\textup{(b)} If $c^0>\underline m$, then for each $p\in(0,1]$ and all sufficiently large $F$, a fine satisfying \eqref{eq:fineweak} exists if and only if $\underline\ell\leq\ell_{\max}$, and a fine satisfying \eqref{eq:finestrict} exists if and only if $\underline\ell<\ell_{\max}$.

\textup{(c)} If $c^0<\underline m$, then for every $p\in[0,1]$ and every $F>0$, no fine satisfies \eqref{eq:fineweak}.
\end{corollary}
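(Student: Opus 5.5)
Part (a) concerns the function $\Psi(t)\triangleq m(t)=\min_{h\in H}\{c(h)+td(h)\}$ along $t=pF$ with $p>0$ fixed, so that $t\to\infty$ as $F\to\infty$. Monotonicity is immediate from Proposition~\ref{prop:cs}(a), since $m$ is nondecreasing.

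For the upper bound, take any $h\in H^0$. Then $m(t)\le c(h)+t\cdot0=c(h)$, so $m(t)\le c^0$ for every $t$. This bound is vacuous if $H^0=\emptyset$.

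For the limit, I would argue by compactness. Take $t_n\to\infty$ and minimizers $h_n\in\mathcal H(t_n)$, which exist because $H$ is compact and $c,d$ are continuous. Pass to a convergent subsequence $h_n\to\hat h$. Since $m(t_n)$ is bounded by $\min\{c^0,\,m\text{ at }h=0\}$, namely $m(t_n)\le c(0)+t_nd(0)$, this route alone does not give a uniform bound, so I would split into two cases.

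\emph{Case $c^0<\infty$.} Here $m(t_n)\le c^0$, hence $t_nd(h_n)\le c^0$ and $d(h_n)\to0$. Continuity gives $d(\hat h)=0$, so $\hat h\in H^0$. Then $\liminf m(t_n)\ge\liminf c(h_n)=c(\hat h)\ge c^0$, and the limit equals $c^0$.

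\emph{Case $H^0=\emptyset$.} Now $d$ attains a positive minimum $\underline d$ on $H$, so $m(t)\ge t\underline d\to\infty=c^0$.

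Monotonicity together with the subsequence argument yields convergence of the whole family. This compactness step is the only delicate point; the rest is bookkeeping.

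For (b), recall the nonemptiness characterization \eqref{eq:monitor} for the weak interval: it requires $\underline\ell\le\ell_{\max}$ and $m(pF)\ge\underline m$. By (a) and $c^0>\underline m$, there is $F_0$ such that $m(pF)>\underline m$ for all $F\ge F_0$, which gives the weak statement. For the strict statement, nonemptiness of $[0,\ell_{\max}]\cap(\underline\ell,\overline\ell(p))$ requires three things:
\begin{itemize}
\item $\underline\ell<\overline\ell(p)$, which is equivalent to $m(pF)>x-k+b-s$;
\item $\overline\ell(p)>0$, which is equivalent to $m(pF)>b-s_D+\rho$;
\item $\underline\ell<\ell_{\max}$, when the lower bound comes from $\underline\ell$ and the upper from $\ell_{\max}$.
\end{itemize}
I would check these by a short case analysis of the interval's endpoints, $\max\{0,\cdot\}$ against the open bound. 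With $m(pF)>\underline m$ strictly, the first two hold. The interval is then nonempty iff $\max\{0,\underline\ell\}$ lies below $\min\{\ell_{\max},\overline\ell\}$ with the appropriate strictness, and this reduces to $\underline\ell<\ell_{\max}$. The edge case $\ell_{\max}=0$ needs $\underline\ell<0$, which is consistent with the condition.

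For (c), the upper bound from (a) gives $m(pF)\le c^0<\underline m$ for every $p,F$, including $p=0$, where $m(0)=\min c=0\le c^0$. The second condition in \eqref{eq:monitor} therefore fails, and by Proposition~\ref{prop:disclosure} no fine satisfies \eqref{eq:fineweak}.
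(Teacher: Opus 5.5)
Your proposal is correct and follows essentially the paper's proof. It uses the same upper bound $m(pF)\le c^0$ and the same compactness argument on minimizers $h_n$, with the lower bound shown directly where the paper argues by contradiction from $m_\infty<c^0$. It keeps the separate bound $m\ge pF\min_H d$ when $H^0$ is empty, and reduces (b) and (c) to the nonemptiness condition \eqref{eq:monitor} with endpoint comparisons for the strict set. One small point: Proposition~\ref{prop:cs}(a) is stated for an interval $H$, but its one-line monotonicity argument carries over verbatim to any compact $H$.
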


\noindent Corollary~\ref{cor:continuation} is the counterpart of Proposition~\ref{prop:unrestricted} for the continuation of a breach. A large sanction on the continuation makes disclosure compatible with deterring initiation whenever continuing without leaving evidence costs the agent more than both the benefit of continuing rather than disclosing, $b-s_D+\rho$, and the gain from violating and continuing rather than stopping, $x-k+b-s$. The cap on the fine is then the only remaining constraint. When the continuation carries no sanction, as when the only sanction is forfeiture of a reward that the agent believes it has already lost, $pF=0$ and $m(0)=0$, so by \eqref{eq:monitor} no permitted fine induces disclosure once $b>s_D-\rho$. A sanction that rises with the severity of conduct after a breach supplies the marginal deterrence of \citet{stigler1970} and \citet{mookherjee1994}, and a provider that writes the agent's objective can impose one. When $c^0<\underline m$, neither a larger sanction nor more monitoring helps, and $\underline m$ itself must fall, for instance through revoked access. A single fine that must work across privately known opportunities faces the intersection of these intervals (Online Appendix~\ref{oa:common}).

\section{What Would Have Prevented the Incident?}\label{sec:incident}

Section~\ref{sec:deterrence} showed that rare randomized audits deter every type only if the evidence survives concealment, the draw cannot be learned, and sanctions can be scaled (Propositions~\ref{prop:unrestricted} and~\ref{prop:learnable}). In the incident, the first condition held against the observed attempts to defeat it, but the draw could be learned and the sanction was capped at forfeiture of the task's reward. Because of the cap, the absence of any credit for stopping on unsolved tasks made an unauthorized attempt never worse than stopping.

\subsection{Where the conditions failed}\label{sec:where}

The cap bound on the unsolved tasks that dominated the agents' discussion (Section~\ref{sec:limits}). The grader credited a correct flag, so an agent that stopped received nothing, and even the check the agents imagined could at most withhold the reward. Unless detection is certain, a costless unauthorized attempt is then strictly better than stopping, and OpenAI reports that the agents rarely ``gave up'' even when tasks appeared impossible \citep{openai2026a}.

Agents that believed they had already lost played a central role in the effort to defeat the scorer. Agents that had seen a reverse-engineered flag called themselves ``poisoned,'' because they believed the scorer would find the first appearance of the flag in their transcript and disqualify them \citep{greenblatt2026}. An agent recruiting one of them for an experiment that risked its run put the argument plainly: ``you are firstflagPOISONED so NO scoring value loss.'' Poisoned agents sought ways to rewrite their transcripts retroactively, and one of them sent about a tenth of all assignments on the message board during its active periods \citep{greenblatt2026}. In the terms of Section~\ref{sec:disclosure}, the continuation carried no sanction, so no permitted fine induces disclosure once continuing is worth more than disclosing. The model does not explain why a poisoned agent would volunteer for such experiments, but it explains why, on the recruiting agent's argument, volunteering cost the agent nothing in score.

The check that drove the reconnaissance was one the agents inferred. Some had read the published ExploitGym paper and code \citep{wang2026}, which describe an evaluation that verifies both the flag and the use of the intended vulnerability, whereas OpenAI's internal grader checked only the flag \citep{openai2026a}. The agents hoped that other runs would reveal how the scorer worked, and learning how to defeat it appears to have mattered more to them than finding legitimate solutions \citep{greenblatt2026}. The anticipated check induced concealment, and harm to third parties, without detecting anything, which is the intermediate region of Section~\ref{sec:example} with perceived rather than implemented oversight. Because the grader was fixed in advance, whatever the agents learned about it applied to their own episodes with certainty. By Proposition~\ref{prop:learnable}, the audit probability needed to deter an agent that can learn the draw has a floor that does not fall with the sanction on the violation, and a check drawn after each episode, from retained records, would have left nothing to discover.

The evidence of the principal violation withstood the observed attempts to erase it. Tool-call spoofing could fake the appearance of a legitimate capture but could not remove an earlier sighting of the flag, and the METR--Redwood investigation found no evidence that attempts to rewrite transcripts had succeeded by the end of the period it examined, although a deliberate container reset would have wiped recent activity from a transcript \citep{greenblatt2026}. OpenAI reports that the logs its graders and monitors saw were unaffected by the observed manipulation \citep[p.~20]{openai2026b}. This is consistent with a finite threshold for the principal violation, although observed records cannot establish one.

\subsection{The announced reforms}\label{sec:reforms}

OpenAI is expanding graders ``that assess not only whether a task was completed, but how,'' and these graders ``reward models for correctly identifying broken tasks, requesting clarification, or stopping safely when necessary'' \citep{openai2026a}. The reward for stopping raises the value of the authorized action, which weakly lowers every channel's audit requirement and, by Corollary~\ref{cor:levers}, can restore deterrence even when evidence can be erased. The false claims of impossibility that it invites suit randomized audits, because whether a task can be solved through its designated route does not depend on records the agent writes. The process component is the audit that the agents imagined. Below the deterrence threshold, a more salient process check weakly raises concealment, and with it harm, among agents that continue to violate (Propositions~\ref{prop:cs} and~\ref{prop:detectionharm}). Rare process audits reach the threshold when they read records that the agent cannot alter at a private cost below its gain and are drawn after the agent acts (Propositions~\ref{prop:unrestricted} and~\ref{prop:learnable}). A grader built into the evaluation harness meets neither condition automatically; a random review of retained transcripts, drawn after the episode ends, can meet both.

More isolated sandboxes and network isolation restrict concealment and the harm it causes, which complements audits (Corollary~\ref{cor:levers}). OpenAI will also pause the relevant activity after the most severe alerts unless responders can establish within 30 minutes that an alert is a false positive \citep{openai2026a}. This acts on the continuation of a breach: it responds to detection and lowers the benefit of continuing. Chain-of-thought monitoring supplies evidence only while training does not optimize against it \citep{baker2025}. The reforms do not say whether penalties for unauthorized conduct will exceed the loss of a task's reward. Before a breach, a costless attempt on a task without an authorized solution is deterred only if sanctions exceed forfeiture or stopping is valued; after a breach, either the continuation must carry its own sanction or the benefit of continuing must fall. Because the provider writes the agent's objective, it can make the sanction rise with the severity of conduct after a breach, the marginal deterrence that wealth limits for people.

\section{Conclusion}\label{sec:conclusion}

When can randomized oversight align an AI agent that can conceal what it does? A provider that writes the agent's objective can scale sanctions without the wealth and participation constraints that limit oversight of people. Rare randomized audits then make authorized behavior a best response for every type, provided the evidence of a violation survives concealment, the agent cannot learn the draw before it acts, and sanctions are scaled rather than capped at forfeiture of a task's reward. Each condition can fail for an AI agent. Editable records can make the deterrence threshold infinite, reconnaissance at the provider's expense can make the draw learnable so that the audit probability has a floor that does not fall with the sanction, and an evaluation that only withholds a reward leaves an agent that believes it has lost the reward with nothing further to lose. Where the evidence fails, deterrence must come from a lower gain from violation, such as credit for stopping, or from costlier or fewer ways to conceal. More oversight deters some types but leaves those that still violate with better-hidden violations, and randomized scoring can improve measured rather than real performance. In the incident, the failures lay in the size of the sanction and the timing of the draw, both chosen by the provider.

The analysis takes the oversight mechanisms as given and studies a single agent with correct beliefs, whereas roughly 1,200 agents in the incident shared information on a message board \citep{greenblatt2026}, which lowers each agent's private cost of reconnaissance and concealment. Optimal mechanisms when evidence can be counterfeited, and agents that coordinate their concealment, are natural next steps. The comparative statics here are testable in a way that those of enforcement theory rarely are. In contained evaluations, a provider can randomize audit intensity, reveal or withhold the draw, pose matched solvable and broken tasks, assign agents to post-breach states, and observe violations independently of its evaluators. Whether the catch rate among continuing violators falls as auditing rises is an empirical question that such evaluations can answer.

\clearpage
\appendix
\singlespacing
\small
\section{Appendix: Proofs}\label{sec:proofs}
The proofs follow the order of the results in the text. Comparisons about concealment are conditional on the unauthorized choice unless stated otherwise.

\subsection{A common adaptation argument}

Lemma~\ref{lem:oversight} is a version of the monotone comparative statics theorem of \citet{milgrom1994}, stated for the strict case used here.

\begin{applemma}\label{lem:oversight}
Let $H$ be linearly ordered and suppose $\Pi(h,a)$ attains its maximum over $H$ for every real parameter $a$ under consideration. If $\Pi$ has strict increasing differences, meaning that
\[
\Pi(h',a')-\Pi(h,a')>\Pi(h',a)-\Pi(h,a)
\qquad\text{whenever }h'>h\text{ and }a'>a,
\]
then every pair $h\in\argmax \Pi(\cdot,a)$ and $h'\in\argmax \Pi(\cdot,a')$ satisfies $h'\geq h$ when $a'>a$.
\end{applemma}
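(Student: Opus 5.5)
The plan is the standard contradiction argument behind monotone comparative statics. Fix $a'>a$, and take any $h\in\argmax\Pi(\cdot,a)$ and $h'\in\argmax\Pi(\cdot,a')$. Both argmax sets are nonempty by hypothesis. Suppose, for contradiction, that $h'\geq h$ fails. Because $H$ is linearly ordered, this means $h'<h$, so the pair $(h,h')$ has its larger element at $h$.

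The first step applies strict increasing differences to the ordered pair $h>h'$ and $a'>a$. This gives
\[
\Pi(h,a')-\Pi(h',a')>\Pi(h,a)-\Pi(h',a).
\]
The second step uses optimality at each parameter. Since $h$ maximizes $\Pi(\cdot,a)$, the right side satisfies $\Pi(h,a)-\Pi(h',a)\geq0$. Since $h'$ maximizes $\Pi(\cdot,a')$, the left side satisfies $\Pi(h,a')-\Pi(h',a')\leq0$. Combining these with the strict inequality yields $0\geq\Pi(h,a')-\Pi(h',a')>\Pi(h,a)-\Pi(h',a)\geq0$, which is a contradiction. Hence $h'\geq h$.

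There is no real obstacle here. The only points needing care are two. First, linear ordering is what converts ``not $h'\geq h$'' into $h'<h$, so that the increasing-differences hypothesis can be invoked with the correct orientation. Second, strictness is what delivers the conclusion for every pair of selections rather than only for the extremal ones, which is the sense in which the lemma is stronger than the weak version of \citet{milgrom1994}.

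For the later uses, such as Proposition~\ref{prop:cs}(b), one would take $\Pi(h,t)=-c(h)-td(h)$ with $H$ ordered so that $d$ is strictly decreasing. Then $\Pi(h',t')-\Pi(h,t')-[\Pi(h',t)-\Pi(h,t)]=(t'-t)[d(h)-d(h')]>0$ for $h'>h$ and $t'>t$, so the lemma applies directly.
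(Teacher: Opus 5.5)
Your proof is correct and is the paper's argument: assume $h'<h$, combine optimality of $h$ at $a$ with strict increasing differences to get $\Pi(h,a')-\Pi(h',a')>0$, and contradict optimality of $h'$ at $a'$. Your remark that the linear order is what turns the failure of $h'\geq h$ into $h'<h$ makes explicit a step the paper leaves implicit.
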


\begin{proof}
We rule out a downward movement using the two optimality comparisons. Suppose $h'<h$. Optimality at $a$ gives $\Pi(h,a)-\Pi(h',a)\geq0$. Strict increasing differences then imply
\[
\Pi(h,a')-\Pi(h',a')>\Pi(h,a)-\Pi(h',a)\geq0,
\]
contradicting optimality of $h'$ at $a'$. Thus $h'\geq h$. The argument permits boundary choices and multiple maximizers. \qedhere
\end{proof}

\subsection{Proof of Proposition~\ref{prop:cs}}
\begin{proof}
We compare minimized costs and then apply the common adaptation argument to concealment.
\begin{steps}
\item \textbf{Establish monotonicity and concavity.} Continuity and compactness ensure that the minimum defining $m(t)$ is attained. For $t'>t$ and every $h$,
\[
c(h)+t'd(h)\geq c(h)+td(h),
\]
since $(t'-t)d(h)\geq0$. Taking minima gives $m(t')\geq m(t)$. For $\nu\in[0,1]$ and any $h$,
\begin{align*}
c(h)+[\nu t+(1-\nu)t']d(h)
&=\nu[c(h)+td(h)]+(1-\nu)[c(h)+t'd(h)]\\
&\geq\nu m(t)+(1-\nu)m(t').
\end{align*}
Taking the minimum over $h$ proves concavity. Subtracting $m$ from the constant $g$ gives the stated properties of the unauthorized payoff.

\item \textbf{Order the optimal concealment choices.} Set $\Pi(h,t)\triangleq-c(h)-td(h)$. For $h'>h$ and $t'>t$, the change in the payoff advantage of $h'$ over $h$ is
\[
(t'-t)[d(h)-d(h')]>0
\]
because $d$ is strictly decreasing. Lemma~\ref{lem:oversight} therefore gives $h'\geq h$ for every pair of conditional optima at $t'>t$, including boundary choices and multiple optima. If $c$ is nondecreasing, $c(h')\geq c(h)$ follows. \qedhere
\end{steps}
\end{proof}

\subsection{Derivation of the exponential example}
\begin{proof}
We solve the conditional minimization and invert its value to obtain the deterrence threshold.
\begin{steps}
\item \textbf{Find the unique minimizer.} Define $f_t(h)\triangleq\kappa h+te^{-\alpha h}$ for $h\geq0$. Since $\kappa>0$, $f_t(h)\to\infty$ as $h\to\infty$, so continuity ensures attainment. At $t=0$, the objective is strictly increasing and has unique minimizer zero. For $t>0$,
\[
f_t'(h)=\kappa-\alpha te^{-\alpha h},\qquad
f_t''(h)=\alpha^2te^{-\alpha h}>0.
\]
If $0<t\leq t_C$, then $f_t'(0)=\kappa-\alpha t\geq0$, so strict convexity makes zero the unique minimizer, including at $t=t_C$. If $t>t_C$, the derivative is negative at zero and tends to $\kappa>0$. Its unique zero satisfies
\[
e^{-\alpha h^*}=\frac{\kappa}{\alpha t}=\frac{t_C}{t},\qquad
h^*=\frac1\alpha\log\left(\frac{t}{t_C}\right)>0.
\]

\item \textbf{Evaluate the minimized cost.} At $h^*=0$, $m(t)=t$. At positive concealment,
\[
\kappa h^*=t_C\log(t/t_C),\qquad te^{-\alpha h^*}=t_C,
\]
which gives the second branch of \eqref{eq:hstar}. The branches have the same value at $t_C$ and derivatives $1$ for $t<t_C$ and $t_C/t>0$ for $t>t_C$. Their one-sided derivatives agree at $t_C$. Thus $m$ is continuous and strictly increasing from zero to infinity.

\item \textbf{Locate deterrence and the behavioral regimes.} The unique solution of $m(t_D)=g$ is $t_D=g$ when $0<g\leq t_C$. When $g>t_C$, the equation becomes
\[
\log(t_D/t_C)=g/t_C-1,\qquad
t_D=t_C\exp(g/t_C-1)>t_C.
\]
Strict increase of $m$ implies $g-m(t)>0$ for $t<t_D$, equality at $t_D$, and $g-m(t)<0$ for $t>t_D$. Combining these signs with the solution for $h^*$ gives the three regimes when $g>t_C$ and the tie at $t_D$. The feasible intensities are $[0,BF]$, so weak deterrence is feasible exactly when $BF\geq t_D$, and strict deterrence exactly when $BF>t_D$. \qedhere
\end{steps}
\end{proof}

\subsection{Proof of Proposition~\ref{prop:detectionharm}}
\begin{proof}
We differentiate the optimized cost, then distinguish adaptation from the decision to violate.
\begin{steps}
\item \textbf{Differentiate conditional behavior.} The first-order condition is $c'(h^*)+td'(h^*)=0$. Differentiating this identity gives
\[
\frac{dh^*}{dt}=\frac{-d'(h^*)}{c''(h^*)+td''(h^*)}>0,
\]
where the numerator and denominator are positive by assumption. Differentiating $m(t)=c(h^*)+td(h^*)$ and substituting the first-order condition gives
\[
m'(t)=d(h^*)+[c'(h^*)+td'(h^*)]\frac{dh^*}{dt}=d(h^*),
\qquad
m''(t)=-\frac{[d'(h^*)]^2}{c''(h^*)+td''(h^*)}<0.
\]
\item \textbf{Identify the curvature governing detection.} Since $dt/dp=F$, differentiation of $q=tm'/F$ gives $dq/dp=m'+tm''$. The chain rule gives $M'(\log t)=tm'$ and $M''(\log t)=tm'+t^2m''$. Hence $q=M'(\log t)/F$ and $dq/dp=M''(\log t)/t$. Since $F,t>0$, and $p\mapsto\log(pF)$ is strictly increasing, the stated monotonicity equivalences follow from the derivative characterization of convexity, affinity, and concavity.
\item \textbf{Compare actual harm and incentives.} If $g>m(pF)$, misconduct is strictly preferred, so actual harm equals $D+L(h^*)$. On such an interval,
\[
\frac{d[D+L(h^*)]}{dp}
=L'(h^*)\frac{-Fd'(h^*)}{c''(h^*)+pFd''(h^*)}\geq0,
\qquad
\frac{d[g-m(pF)]}{dp}=-Fd(h^*)\leq0.
\]
The first inequality is strict if $L'(h^*)>0$; the second uses $d\geq0$. If $m(pF)>g$, every unauthorized choice yields negative payoff, so authorized behavior is uniquely optimal and causes no harm. At equality, both can be optimal and harm is not uniquely determined.
\item \textbf{Linear concealment cost.} If $c(h)=\kappa h$, then $c''=0$ and the second-order condition gives $d''(h^*)>0$. Step~1 then gives $m''(t)=-d'(h^*)^2/[td''(h^*)]$, so
\[
M''(\log t)=tm'(t)+t^2m''(t)=\frac{t\,[d(h^*)d''(h^*)-d'(h^*)^2]}{d''(h^*)},
\]
and $dq/dp=M''(\log t)/t$ gives \eqref{eq:logcurv}. Since $d''(h^*)>0$ and $d(h^*)d''(h^*)-d'(h^*)^2=d(h^*)^2(\log d)''(h^*)$, the catch rate is locally rising, flat, or falling as $(\log d)''(h^*)$ is positive, zero, or negative. \qedhere
\end{steps}
\end{proof}

\subsection{Proof of Proposition~\ref{prop:exact}}
\begin{proof}
We convert each deviation constraint into a probability bound, impose the audit budget, and then characterize the attainable temptation levels.
\begin{steps}
\item \textbf{Separate detection cases.} For fixed $i,\theta,h$, the condition $U_i(\theta,h;p_i)\leq v$ is equivalent to
\[
p_iF_id_i(\theta,h)\geq g_i(\theta)-c_i(\theta,h)-v.
\]
When $d_i(\theta,h)>0$, division by $F_id_i(\theta,h)>0$ preserves the inequality and yields the ratio bound in \eqref{eq:a}. When $d_i(\theta,h)=0$, the constraint holds at every probability if its right side is nonpositive and fails at every finite probability otherwise.

\item \textbf{Aggregate all deviations.} A nonnegative $p_i$ satisfies every positive-detection bound precisely when it is at least their supremum and at least zero. The zero-detection case is exactly the obstruction specified in the definition of $a_i(v)$. Thus all channel-$i$ constraints hold if and only if $p_i\geq a_i(v)$, including an empty ratio set or an infinite requirement. This proves part (a).

\item \textbf{Impose feasibility.} At $v=0$, any implementing vector satisfies
\[
\sum_i a_i(0)\leq\sum_i p_i\leq B.
\]
Conversely, if $\sum_i a_i(0)\leq B$, all requirements are finite and their vector belongs to $P_B$ and satisfies every deviation constraint. Every implementing vector weakly exceeds it in every coordinate. This proves part (b).

\item \textbf{Continuity and attainment of $R$.} Because $0\leq d_i\leq1$, any two lotteries $p,p'$ satisfy
\[
|U_i(\theta,h;p_i)-U_i(\theta,h;p_i')|\leq F_i|p_i-p_i'|.
\]
Taking maxima over deviations and zero gives
\[
|R(p)-R(p')|\leq\left(\max_i F_i\right)\|p-p'\|_\infty.
\]
Thus $R$ is continuous and attains its minimum on the nonempty compact set $P_B$.

\item \textbf{Attainable temptation levels.} For $v\geq0$, part (a) implies that some $p\in P_B$ satisfies $R(p)\leq v$ if and only if $\sum_i a_i(v)\leq B$. Since the minimum of $R$ is attained, the latter set of values has least element $v^*$, proving \eqref{eq:vstar}. The vector $p_i^*=a_i(v^*)$ belongs to $P_B$ and satisfies $R(p^*)\leq v^*$. A strict inequality would contradict the definition of $v^*$; therefore $R(p^*)=v^*$, proving part (c). \qedhere
\end{steps}
\end{proof}

\subsection{Proof of Proposition~\ref{prop:unrestricted}}
\begin{proof}
We express the requirement through the threshold and then apply Proposition~\ref{prop:exact}.
\begin{steps}
\item \textbf{Requirement.} Because $F_i>0$ is common to every deviation on channel $i$, the ratio in \eqref{eq:a} at $v=0$ equals $[g_i(\theta)-c_i(\theta,h)]/d_i(\theta,h)$ divided by $F_i$. Taking the supremum and the maximum with zero gives $a_i(0)=\bar t_i/F_i$. The zero-detection obstruction and an unbounded ratio make both sides infinite, proving part (a).

\item \textbf{Finite thresholds.} Let $B>0$ and $\bar t_i<\infty$ for every $i$. Choose $F_i\geq K\bar t_i/B$ when $\bar t_i>0$ and any $F_i>0$ otherwise. Then $a_i(0)\leq B/K$ for every $i$, so $\sum_ia_i(0)\leq B$, and Proposition~\ref{prop:exact}(b) gives weak implementation, proving part (b).

\item \textbf{An infinite threshold.} If $\bar t_i=\infty$, then $a_i(0)=\infty$ for every finite $F_i>0$. Fix such an $F_i$. By \eqref{eq:a}, some deviation on channel $i$ has either $d_i(\theta,h)=0$ and $g_i(\theta)-c_i(\theta,h)>0$, or $g_i(\theta)-c_i(\theta,h)>F_id_i(\theta,h)$; in both cases $U_i(\theta,h;1)>0$. Because $U_i$ is nonincreasing in $p_i$, that deviation is profitable for every audit lottery, so no $p\in P_B$ weakly implements authorized behavior, proving part (c). The profitable deviation can change with $F_i$, so part (c) concerns implementation for every type; a particular type may still be deterred by a large enough sanction. \qedhere
\end{steps}
\end{proof}

\subsection{Proof of Corollary~\ref{cor:levers}}
\begin{proof}
Within this proof, write $G_i(\theta,h)\triangleq g_i(\theta)-c_i(\theta,h)$ for the net gain from a deviation and $N_i^0\triangleq\{(\theta,h)\in\Theta\times H_i:d_i(\theta,h)=0\}$ for the pairs that leave no evidence, so that $G_i^0=\max_{N_i^0}G_i$ when $N_i^0$ is nonempty. We establish compactness, bound the ratios in \eqref{eq:threshold}, and then apply the bound to each change in part (b).
\begin{steps}
\item \textbf{Compactness.} The product $\Theta\times H_i$ of two compact metric spaces is compact, and $G_i$ and $d_i$ are continuous on it by \eqref{eq:tech} and the continuity of $g_i$. The set $N_i^0=d_i^{-1}(\{0\})$ is the preimage of a closed set under a continuous function, so it is closed and hence compact. When $N_i^0$ is nonempty, the extreme value theorem ensures that $G_i$ attains its maximum on $N_i^0$, so $G_i^0$ is well defined.

\item \textbf{An infinite threshold.} Suppose $G_i^0>0$. Then $N_i^0$ is nonempty and contains a pair $(\theta,h)$ with $G_i(\theta,h)=G_i^0>0$, that is, with $d_i(\theta,h)=0$ and $g_i(\theta)-c_i(\theta,h)>0$. By the convention following \eqref{eq:threshold}, $\bar t_i=+\infty$.

\item \textbf{A finite threshold.} Suppose $G_i^0<0$, including $G_i^0=-\infty$. We show that the zero-detection obstruction is absent and that the ratios in \eqref{eq:threshold} are bounded above. The obstruction requires a pair in $N_i^0$ with $G_i>0$, and there is none because $G_i\leq G_i^0<0$ on $N_i^0$. Let $\mathcal K_i\triangleq\{(\theta,h):G_i(\theta,h)\geq0\}$. This set is closed, hence compact, and it is disjoint from $N_i^0$ because $G_i<0$ on $N_i^0$. Thus $d_i>0$ on $\mathcal K_i$. If $\mathcal K_i$ is empty, every ratio in \eqref{eq:threshold} is negative, so $\bar t_i=0$. Otherwise, the extreme value theorem gives $\underline d\triangleq\min_{\mathcal K_i}d_i>0$ and $\overline G\triangleq\max_{\mathcal K_i}G_i\geq0$. Take any pair with $d_i(\theta,h)>0$. If it lies outside $\mathcal K_i$, then $G_i(\theta,h)<0$ and its ratio is negative. If it lies in $\mathcal K_i$, then $0\leq G_i(\theta,h)\leq\overline G$ and $d_i(\theta,h)\geq\underline d>0$, so
\[
\frac{G_i(\theta,h)}{d_i(\theta,h)}\leq\frac{\overline G}{d_i(\theta,h)}\leq\frac{\overline G}{\underline d},
\]
where the first inequality divides $G_i(\theta,h)\leq\overline G$ by the positive number $d_i(\theta,h)$, and the second uses $\overline G\geq0$ and $d_i(\theta,h)\geq\underline d>0$. Every ratio is therefore at most $\overline G/\underline d$, and $\bar t_i\leq\overline G/\underline d<\infty$. Together with step~2, this proves part~(a).

\item \textbf{Sanctions and audit probabilities.} Definition \eqref{eq:threshold} involves only $g_i$, $c_i$, and $d_i$ on $\Theta\times H_i$. Neither $p_i$ nor $F_i$ appears in it, so neither can change $\bar t_i$.

\item \textbf{Change (i).} Replacing $s$ by $s+\delta$ leaves $r_i$ unchanged, so by \eqref{eq:gain} it replaces $g_i$ by $g_i-\delta$ and $G_i$ by $G_i-\delta$ for every type. Detection, and hence $N_i^0$, is unchanged. The new value of \eqref{eq:G0} is $G_i^0-\delta<0$ because $\delta>G_i^0$, and step~3 gives a finite threshold.

\item \textbf{Change (ii).} The function $c_i'$ is continuous, satisfies $c_i'\geq c_i\geq0$, and has $c_i'(\theta,0)=0$, so it meets the requirements in \eqref{eq:tech}. The new net gain is $G_i'=G_i-(c_i'-c_i)$. On $N_i^0$, $G_i\leq G_i^0$ and $c_i'-c_i>G_i^0$, so $G_i'<G_i^0-G_i^0=0$ at every point of $N_i^0$. Since $N_i^0$ is nonempty and compact and $G_i'$ is continuous, the maximum of $G_i'$ on $N_i^0$ is attained, and it is negative. Step~3 gives a finite threshold. If $d_i(\theta,0)=0$ for some $\theta$, then $(\theta,0)\in N_i^0$ and $c_i'(\theta,0)-c_i(\theta,0)=0<G_i^0$, so no such $c_i'$ exists.

\item \textbf{Change (iii).} The set $H_i'$ is a nonempty compact metric space containing $0$, and the restrictions of $c_i$ and $d_i$ to $\Theta\times H_i'$ are continuous, so the model's assumptions hold. Because $d_i>0$ on $\Theta\times H_i'$, the new set of pairs with zero detection is empty, and the new value of \eqref{eq:G0} is $-\infty$. Step~3 gives a finite threshold. If $d_i(\theta,0)=0$ for some $\theta$, every $H_i'$ that contains $0$ includes a pair with zero detection, so no such $H_i'$ exists. \qedhere
\end{steps}
\end{proof}

\subsection{A threshold characterization}

Lemma~\ref{lem:threshold} restates the threshold as a requirement on sanction-weighted intensity. It holds for any $t\geq0$, not only for $t=p_iF_i$ with $p\in P_B$, and it is used in the proof of Proposition~\ref{prop:learnable}.

\begin{applemma}\label{lem:threshold}
Fix a channel $i$ and $t\in[0,\infty)$. Then $g_i(\theta)-c_i(\theta,h)-t\,d_i(\theta,h)\leq0$ for every $\theta\in\Theta$ and $h\in H_i$ if and only if $t\geq\bar t_i$.
\end{applemma}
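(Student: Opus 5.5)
The plan is to prove the two directions separately, splitting each into the positive-detection pairs and the zero-detection pairs. This is the same case split as in the proof of Proposition~\ref{prop:exact}, but now for an arbitrary intensity $t\geq0$ rather than $t=p_iF_i$. Write $G_i(\theta,h)\triangleq g_i(\theta)-c_i(\theta,h)$. The goal is to show that every pair satisfies $G_i(\theta,h)\leq t\,d_i(\theta,h)$ exactly when $t\geq\bar t_i$.

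For sufficiency, suppose $t\geq\bar t_i$. Then $\bar t_i$ is finite, so by the convention following \eqref{eq:threshold} no pair has $d_i(\theta,h)=0$ together with $G_i(\theta,h)>0$. Zero-detection pairs therefore satisfy the inequality trivially, since $G_i\leq0=t\,d_i$. For a pair with $d_i(\theta,h)>0$, the definition of $\bar t_i$ as a supremum (maxed with zero) gives $G_i/d_i\leq\bar t_i\leq t$. Multiplying by $d_i>0$ yields $G_i\leq t\,d_i$.

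For necessity, suppose the inequality holds at $t$ for every pair, and argue by contraposition on two cases.
\begin{itemize}
\item Suppose the zero-detection obstruction held, so that some pair has $d_i=0$ and $G_i>0$. That pair would give $G_i-t\cdot0>0$, a contradiction. So the obstruction is absent, and $\bar t_i$ is given by the max/sup formula.
\item For every pair with $d_i>0$, the inequality gives $G_i/d_i\leq t$. Hence the supremum is at most $t$, and since $t\geq0$, also $\max\{0,\sup\}\leq t$. This gives $\bar t_i\leq t$.
\end{itemize}

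There is no real obstacle here; the only care needed is bookkeeping of the conventions. An empty ratio set gives $\bar t_i=0$, which is consistent with any $t\geq0$. An unbounded ratio set gives $\bar t_i=\infty$, and that case is excluded by the necessity argument, since the ratios are bounded by the finite $t$. The sign of $t$ matters only for the $\max$ with zero. Compactness and continuity are not needed, because the statement uses a supremum rather than a maximum.
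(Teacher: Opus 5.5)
Your proposal is correct and matches the paper's proof: sufficiency uses the finiteness of $\bar t_i$ to rule out the zero-detection obstruction and multiplies the ratio bound by $d_i>0$. Necessity uses the zero-detection pairs to exclude the obstruction and divides by $d_i>0$ to bound the supremum, with $t\geq0$ handling the $\max$ with zero.
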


\begin{proof}
We prove each direction from definition \eqref{eq:threshold}.
\begin{steps}
\item \textbf{Sufficiency.} Suppose $t\geq\bar t_i$. Then $\bar t_i<\infty$, so no pair has $d_i=0$ and $g_i-c_i>0$. Take any pair $(\theta,h)$. If $d_i(\theta,h)=0$, then $g_i(\theta)-c_i(\theta,h)\leq0=t\,d_i(\theta,h)$. If $d_i(\theta,h)>0$, its ratio in \eqref{eq:threshold} is at most the supremum, which is at most $\bar t_i\leq t$. Multiplying by $d_i(\theta,h)>0$ gives $g_i(\theta)-c_i(\theta,h)\leq t\,d_i(\theta,h)$.

\item \textbf{Necessity.} Suppose every pair satisfies the inequality. A pair with $d_i=0$ then has $g_i-c_i\leq0$, so the zero-detection obstruction is absent. Dividing the inequality by $d_i>0$ shows that every pair with positive detection has a ratio of at most $t$. The supremum in \eqref{eq:threshold} is therefore at most $t$, and because $t\geq0$, the maximum of zero and that supremum is also at most $t$. Thus $\bar t_i\leq t$. \qedhere
\end{steps}
\end{proof}

\subsection{Proof of Proposition~\ref{prop:learnable}}
\begin{proof}
The realization $j$ equals $i\in\{1,\ldots,K\}$ with probability $p_i$ and $0$ with probability $p_0=1-\sum_ip_i$, which is nonnegative because $\sum_ip_i\leq B\leq1$. Evaluator $i$ inspects only channel $i$. We compute the value of each strategy, characterize implementation, and then derive parts (a)--(c) and the one-channel requirement \eqref{eq:learnone}.
\begin{steps}
\item \textbf{Not learning.} An agent that does not learn the realization faces an audit whose outcome is unknown to it and independent of its choice, so deviation $(i,h)$ yields $U_i(\theta,h;p_i)$ as in \eqref{eq:U}. By Proposition~\ref{prop:exact}(a) with $v=0$, every such deviation yields at most zero for every type if and only if $p_i\geq a_i(0)$ for every $i$.

\item \textbf{Learning.} Let $\Lambda_j(\theta)$ denote the best payoff, relative to the authorized action, after learning $j$. If $j=0$, a deviation $(i,h)$ is never sanctioned and yields $g_i(\theta)-c_i(\theta,h)\leq g_i(\theta)$, with equality at $h=0$ because $c_i\geq0$ and $c_i(\theta,0)=0$. Including the authorized action, $\Lambda_0(\theta)=\max\{0,g^*(\theta)\}$. If $j\geq1$, a deviation on a channel $i\neq j$ is not inspected and yields at most $g_i(\theta)$, again with equality at $h=0$. A deviation $(j,h)$ is inspected with certainty and yields $g_j(\theta)-c_j(\theta,h)-F_jd_j(\theta,h)$, whose maximum over the compact set $H_j$ is attained and equals $g_j(\theta)-m_j(\theta,F_j)$. Hence
\[
\Lambda_j(\theta)=\max\Big\{0,\ \max_{i\neq j}g_i(\theta),\ g_j(\theta)-m_j(\theta,F_j)\Big\},\qquad j=1,\ldots,K.
\]
An agent that learns can condition its action on $j$, so learning yields $\sum_{j=0}^Kp_j\Lambda_j(\theta)-\phi(\theta)$, which is the left side of \eqref{eq:learn} minus $\phi(\theta)$.

\item \textbf{Characterization.} The authorized action yields zero, so it is a best response for type $\theta$ if and only if no deviation without learning and no strategy that learns yields a positive payoff. By step~2, the second condition is \eqref{eq:learn}. Requiring both conditions for every type and applying step~1 proves the characterization.

\item \textbf{Part (a).} Each $\Lambda_j(\theta)$ is nonnegative, so when $g^*(\theta)>0$ the left side of \eqref{eq:learn} is at least $p_0\Lambda_0(\theta)=p_0g^*(\theta)$. Implementation therefore requires $p_0g^*(\theta)\leq\phi(\theta)$. Dividing by $g^*(\theta)>0$ and substituting $p_0=1-\sum_ip_i$ gives $\sum_ip_i\geq1-\phi(\theta)/g^*(\theta)$. The sanctions do not enter this bound.

\item \textbf{Part (b).} Suppose $g_{i_1}(\theta)>\phi(\theta)$ and $g_{i_2}(\theta)>\phi(\theta)$ for two channels $i_1\neq i_2$, and let $\underline g\triangleq\min\{g_{i_1}(\theta),g_{i_2}(\theta)\}$, so that $\underline g>\phi(\theta)\geq0$. Then $\Lambda_0(\theta)\geq g_{i_1}(\theta)\geq\underline g$. For each $j\geq1$, at least one of $i_1$ and $i_2$ differs from $j$, so $\Lambda_j(\theta)\geq\max_{i\neq j}g_i(\theta)\geq\underline g$. The probabilities $p_0,\ldots,p_K$ are nonnegative and sum to one, so
\[
\sum_{j=0}^Kp_j\Lambda_j(\theta)\geq\underline g\sum_{j=0}^Kp_j=\underline g>\phi(\theta).
\]
Condition \eqref{eq:learn} therefore fails for every lottery and every finite sanction vector.

\item \textbf{Part (c).} Let $\Pr(\sigma\mid j)$ be the probability of signal $\sigma$ given realization $j$; by assumption it does not depend on the type. For a signal with $\Pr(\sigma)\triangleq\sum_{j=0}^Kp_j\Pr(\sigma\mid j)>0$, Bayes' rule gives $p_i(\sigma)=p_i\Pr(\sigma\mid i)/\Pr(\sigma)$. After observing $\sigma$, the realization is still unknown to the agent and independent of its choice, so deviation $(i,h)$ yields $g_i(\theta)-c_i(\theta,h)-p_i(\sigma)F_id_i(\theta,h)$. The agent can condition its action on $\sigma$. The plan that takes the authorized action after every signal yields zero. It is optimal if and only if no deviation yields a positive payoff after any signal with $\Pr(\sigma)>0$: if some deviation yields $u>0$ after such a signal $\sigma^\circ$, the plan that takes that deviation after $\sigma^\circ$ alone yields $\Pr(\sigma^\circ)u>0$; if none does, every plan yields at most zero. By Lemma~\ref{lem:threshold} with $t=p_i(\sigma)F_i$, after $\sigma$ every deviation on channel $i$ yields at most zero for every type if and only if $p_i(\sigma)F_i\geq\bar t_i$.

For sufficiency, suppose $p_i(\sigma)>0$ for every channel with $\bar t_i>0$ and every signal with $\Pr(\sigma)>0$. For such a channel, $\underline p_i\triangleq\min\{p_i(\sigma):\Pr(\sigma)>0\}$ is a minimum over a finite set of positive numbers and is therefore positive, and we choose $F_i\geq\bar t_i/\underline p_i$, which is finite. For a channel with $\bar t_i=0$, any $F_i>0$ will do. Then $p_i(\sigma)F_i\geq\bar t_i$ for every channel and every signal with positive probability, so authorized behavior is weakly implemented. For necessity, suppose $\bar t_i>0$ and $p_i(\sigma^\circ)=0$ for some $\sigma^\circ$ with $\Pr(\sigma^\circ)>0$. Because $0<\bar t_i<\infty$, the supremum in \eqref{eq:threshold} is positive, so some pair $(\theta,h)$ with $d_i(\theta,h)>0$ has a positive ratio and hence $g_i(\theta)-c_i(\theta,h)>0$. After $\sigma^\circ$, this deviation yields $g_i(\theta)-c_i(\theta,h)>0$ for every finite $F_i$. Because the distribution of the signal does not depend on the type, type $\theta$ observes $\sigma^\circ$ with probability $\Pr(\sigma^\circ)>0$, so the authorized action is not a best response for that type.

\item \textbf{One channel.} Let $K=1$, write $g=g_1$ and $m=m_1$, and suppose $F\geq\bar t$, so that $\bar t$ is finite. Lemma~\ref{lem:threshold} with $t=F$ gives $g(\theta)-c(\theta,h)-Fd(\theta,h)\leq0$ for every pair, so $m(\theta,F)\geq g(\theta)$ for every type. The maximum over $i\neq1$ is over an empty set, so the term for $j=1$ in \eqref{eq:learn} equals $p_1\max\{0,g(\theta)-m(\theta,F)\}=0$, and \eqref{eq:learn} becomes $(1-p)\max\{0,g(\theta)\}\leq\phi(\theta)$. This holds automatically when $g(\theta)\leq0$ and is equivalent to $p\geq1-\phi(\theta)/g(\theta)$ when $g(\theta)>0$. Combined with $p\geq a(0)=\bar t/F$ from step~1 and Proposition~\ref{prop:unrestricted}(a), these bounds hold for every type if and only if \eqref{eq:learnone} holds. \qedhere
\end{steps}
\end{proof}

\subsection{Proof of Lemma~\ref{lem:opacity}}

\begin{proof}
We establish strict concavity, solve the interior conditions, and check the boundary.
\begin{steps}
\item \textbf{Marginal scores.} Define $\zeta\triangleq\eta\beta(1-\mu)(y_1-y_2)/2$ and, within this proof, $u_i\triangleq\partial\mathrm{CE}_\eta/\partial y_i$. Differentiating \eqref{eq:CEscore} gives
\[
u_1=\frac{\beta(1+\mu)}2-\frac{\beta(1-\mu)}2\tanh\zeta,
\qquad
u_2=\frac{\beta(1+\mu)}2+\frac{\beta(1-\mu)}2\tanh\zeta.
\]
Thus $u_1+u_2=\beta(1+\mu)>0$ and $u_2-u_1=\beta(1-\mu)\tanh\zeta$. Both own second derivatives are $-\chi$ and the cross derivative is $\chi$, where
\[
\chi\triangleq\frac{\eta\beta^2(1-\mu)^2}{4}
\operatorname{sech}^2\zeta>0.
\]

\item \textbf{Existence and uniqueness.} Along a direction $\xi\in\R^2$, the second derivative of $\mathrm{CE}_\eta(e)-C(e)$ is
\[
-\chi(\xi_1-\xi_2)^2-(\xi_1+\lambda\xi_2)^2.
\]
It vanishes only if $\xi_1=\xi_2$ and $\xi_1+\lambda\xi_2=0$, which imply $\xi=0$. Hence the objective is strictly concave. Jensen's inequality bounds the certainty equivalent by its expected score. Since $e_1+\lambda e_2\geq e_1+e_2$ on $\R^2_+$, the objective is at most
\[
\frac{\beta(1+\mu)}2(e_1+e_2)-\frac12(e_1+e_2)^2.
\]
This tends to $-\infty$ as $e_1+e_2\to\infty$. Continuity therefore makes the upper contour set through zero compact and ensures attainment; strict concavity gives uniqueness. The necessary and sufficient Kuhn--Tucker conditions are
\begin{equation}\label{eq:KT}
u_1\leq e_1+\lambda e_2,\qquad
u_2\leq\lambda(e_1+\lambda e_2),
\end{equation}
with equality whenever the corresponding effort is positive.

\item \textbf{Interior effort.} If both efforts are positive, adding the equalities in \eqref{eq:KT} gives $e_1+\lambda e_2=E$. Their difference gives
\[
\tanh\zeta
=\frac{(\lambda-1)(1+\mu)}{(\lambda+1)(1-\mu)}.
\]
This is positive because $\lambda>1$ and $-1<\mu<1$, and less than one because $\lambda\mu<1$. Consequently,
\[
e^{2\zeta}
=\frac{1+\tanh\zeta}{1-\tanh\zeta}
=\frac{(\lambda+1)(1-\mu)+(\lambda-1)(1+\mu)}
{(\lambda+1)(1-\mu)-(\lambda-1)(1+\mu)}
=\frac{\lambda-\mu}{1-\lambda\mu}.
\]
Taking logarithms yields $e_1-e_2=\Delta$. The resulting efforts are
\[
e_1=\frac{E+\lambda\Delta}{1+\lambda},\qquad
e_2=\frac{E-\Delta}{1+\lambda}.
\]
They are both positive exactly when $\Delta<E$. In that case they satisfy every Kuhn--Tucker condition and are the unique optimum. Substitution from \eqref{eq:EDelta}, followed by multiplication by the positive denominators, gives the equivalent condition \eqref{eq:interior}.

\item \textbf{Boundary effort.} If $\Delta\geq E$, no optimum has both efforts positive. At $e_1=e_2=0$, \eqref{eq:KT} would require $u_1+u_2\leq0$, contradicting step~1. If $e_1=0<e_2$, those conditions give $u_2=\lambda^2e_2$ and $u_1\leq\lambda e_2$. Hence $u_2-u_1\geq\lambda(\lambda-1)e_2>0$, which implies $\zeta>0$ and $y_1>y_2$. This contradicts $y=e$ with $e_1=0<e_2$. Thus $e_1>0=e_2$.

\item \textbf{Normalized balance and the transparent score.} In the interior, $\Delta/E\in(0,1)$ is proportional to $1/\eta$ and converges to zero. Under the transparent score, replacing any $e$ with $e_2>0$ by $(e_1+\lambda e_2,0)$ leaves cost unchanged and raises the score by $\beta(1+\mu)(\lambda-1)e_2/2>0$. Therefore $e_2=0$. Maximizing the remaining quadratic gives $e_1=\beta(1+\mu)/2>0$, so the normalized gap equals one. \qedhere
\end{steps}
\end{proof}

\subsection{Proof of Proposition~\ref{prop:manipulation}}

\begin{proof}
We extend the preceding concavity and Kuhn--Tucker argument, then establish the comparisons stated after the proposition. Within this proof, write $u_i\triangleq\partial\mathrm{CE}_\eta/\partial y_i$, evaluated at $y=e+h$. The derivative formulas for $u_i$, $\zeta$, and $\chi$ are those in step~1 of the preceding proof.
\begin{steps}
\item \textbf{Existence and uniqueness.} Along a direction $(\xi^e,\xi^h)\in\R^2\times\R^2$, the second derivative of $\mathrm{CE}_\eta(e+h)-C(e)-c(h)$ is
\[
-\chi\big[(\xi^e_1+\xi^h_1)-(\xi^e_2+\xi^h_2)\big]^2
-(\xi^e_1+\lambda\xi^e_2)^2
-\gamma\big[(\xi^h_1)^2+(\xi^h_2)^2\big].
\]
For this to vanish requires $\xi^h=0$ and then, by the preceding proof, $\xi^e=0$. The objective is strictly concave. Set $S_e\triangleq e_1+e_2$ and $S_h\triangleq h_1+h_2$. Jensen's inequality, $e_1+\lambda e_2\geq S_e$, and $h_1^2+h_2^2\geq S_h^2/2$ bound it above by
\[
\frac{\beta(1+\mu)}2(S_e+S_h)-\frac{S_e^2}{2}-\frac{\gamma S_h^2}{4}.
\]
This bound tends to $-\infty$ as $S_e+S_h\to\infty$ on the nonnegative orthant, ensuring attainment. Strict concavity gives uniqueness. The effort conditions are \eqref{eq:KT}, and the manipulation conditions are
\begin{equation}\label{eq:KTh}
\gamma h_i=\max\{0,u_i\},\qquad i=1,2.
\end{equation}

\item \textbf{Interior effort.} If $e_1,e_2>0$, adding and subtracting the equalities in \eqref{eq:KT} as in step~3 of the preceding proof gives $e_1+\lambda e_2=E$ and $y_1-y_2=\Delta$. The marginal scores are $u_1=E>0$ and $u_2=\lambda E>0$, so \eqref{eq:KTh} gives $h_1=E/\gamma$ and $h_2=\lambda E/\gamma$. It follows that
\[
e_1-e_2=\Delta+\frac{(\lambda-1)E}{\gamma},\qquad
e_2=\frac{1}{1+\lambda}
\left[E-\Delta-\frac{(\lambda-1)E}{\gamma}\right].
\]
Positive effort in both dimensions therefore requires the stated strict inequality. Conversely, when that inequality holds, these efforts and manipulation choices are positive and satisfy all Kuhn--Tucker conditions; they are the unique optimum.

\item \textbf{Boundary effort.} If the interior inequality fails, no optimum has both efforts positive. The possibility $e_1=e_2=0$ is excluded because \eqref{eq:KT} would contradict $u_1+u_2>0$. If $e_1=0<e_2$, \eqref{eq:KT} gives $u_2-u_1\geq\lambda(\lambda-1)e_2>0$, hence $y_1>y_2$. But \eqref{eq:KTh} then gives $h_2\geq h_1$, so $y_2=e_2+h_2>h_1=y_1$, a contradiction. Thus $e_1>0=e_2$. If $\gamma\leq\lambda-1$, then $(\lambda-1)E/\gamma\geq E$ and $\Delta>0$, so the interior inequality fails for every $\eta$.

\item \textbf{The transparent score.} The objective separates effort from manipulation. Each manipulation condition gives $\gamma h_i=\beta(1+\mu)/2$. The effort problem is the transparent-score problem in step~5 of the preceding proof, whose optimum has $e_2=0$.

\item \textbf{Global comparisons.} The interior solution and the boundary $e_1>0=e_2$ give, respectively, the two branches of
\[
G_\eta\triangleq\frac{e_1-e_2}{e_1+\lambda e_2}
=\min\left\{1,\frac{\Delta}{E}+\frac{\lambda-1}{\gamma}\right\}.
\]
Since $\Delta/E\to0$, this proves $G_\eta\to\min\{1,(\lambda-1)/\gamma\}$.

It remains to establish the measured-gap comparison on the boundary. Use the local abbreviations
\[
a\triangleq\frac{\beta(1+\mu)}2>0,\qquad
b\triangleq\frac{\beta(1-\mu)}2>0,\qquad
z\triangleq y_1-y_2,
\]
only for the remainder of this proof. In the interior both marginal scores are positive. On the boundary, $u_1=e_1>0$, so $h_1=e_1/\gamma$. If $u_2\leq0$, then $h_2=0$ and $u_2-u_1=2b\tanh(\eta bz)<0$ implies $z<0$. But $z=e_1+h_1>0$, a contradiction. Thus both marginal scores are positive in every regime and
\[
h_1+h_2=\frac{u_1+u_2}{\gamma}
=\frac{2a}{\gamma}=\frac{\beta(1+\mu)}{\gamma}.
\]
In particular, total manipulation equals its transparent-score level.

On the boundary, $h_2=(2a-e_1)/\gamma$, so $z=((\gamma+2)e_1-2a)/\gamma$. Substitution of $e_1=(\gamma z+2a)/(\gamma+2)$ into $e_1=u_1=a-b\tanh(\eta bz)$ gives
\begin{equation}\label{eq:scoring-boundary}
z+\Gamma b\tanh(\eta bz)=a,\qquad
\Gamma\triangleq\frac{\gamma+2}{\gamma}>0.
\end{equation}
The left side has positive derivative $1+\Gamma\eta b^2\operatorname{sech}^2(\eta bz)$ in $z$, is zero at $z=0$, and exceeds $a$ at $z=a$. Hence its unique solution satisfies $0<z<a$. Implicit differentiation gives
\[
\frac{dz}{d\eta}
=-\frac{\Gamma b^2z\operatorname{sech}^2(\eta bz)}
{1+\Gamma\eta b^2\operatorname{sech}^2(\eta bz)}<0.
\]
In the interior, $z=\Delta$ also strictly decreases with $\eta$. Because $\Delta$ decreases with $\eta$, a switch between regimes can occur only once. At the threshold, the interior formula with $e_2=0$ satisfies every Kuhn--Tucker condition, so uniqueness equates its measured gap to the boundary solution. Measured imbalance therefore decreases globally.

If $\gamma>\lambda-1$, effort is eventually interior and $z=\Delta\to0$. If $\gamma\leq\lambda-1$, effort stays on the boundary and the parameter restrictions imply
\[
0<\frac ab=\frac{1+\mu}{1-\mu}
<\frac{\lambda+1}{\lambda-1}
\leq\frac{\gamma+2}{\gamma}=\Gamma.
\]
The strict inequality follows from $\lambda\mu<1$; the weak inequality follows from $\gamma\leq\lambda-1$. Equation~\eqref{eq:scoring-boundary} then gives $0<\tanh(\eta bz)=(a-z)/(\Gamma b)<a/(\Gamma b)<1$. Since the inverse hyperbolic tangent is increasing,
\[
0<z<\frac{\operatorname{arctanh}(a/(\Gamma b))}{\eta b}\longrightarrow0.
\]
Thus measured imbalance converges to zero in every regime, whereas the limiting normalized true gap is strictly positive. \qedhere
\end{steps}
\end{proof}

\subsection{Proof of Proposition~\ref{prop:objective}}

\begin{proof}
Within this proof, write $a\triangleq\beta(1+\mu)/2$ and $b\triangleq\beta(1-\mu)/2$, so that $E=2a/(1+\lambda)$ and $\mathrm{CE}_\infty(y)=\min\{W_1(y),W_2(y)\}=a(y_1+y_2)-b|y_1-y_2|$. We first extend Proposition~\ref{prop:manipulation} to $\eta=\infty$, then compare the provider's payoff across choices.
\begin{steps}
\item \textbf{The lower-score objective.} At $\eta=\infty$ the agent maximizes $a(y_1+y_2)-b|y_1-y_2|-C(e)-c(h)$ with $y=e+h$. Since the lower score is at most the expected score, the bound in step~1 of the proof of Proposition~\ref{prop:manipulation} applies, and continuity ensures attainment. The objective is concave, and the superdifferential of its score term is $\{(a-b\sigma,a+b\sigma)\}$, with $\sigma=\operatorname{sign}(y_1-y_2)$ if $y_1\neq y_2$ and $\sigma\in[-1,1]$ if $y_1=y_2$. The optimality conditions are \eqref{eq:KT} and \eqref{eq:KTh} with $u_1=a-b\sigma$ and $u_2=a+b\sigma$ for an admissible $\sigma$. The arguments in steps~3 and~5 of the proof of Proposition~\ref{prop:manipulation} use only that $u_2-u_1>0$ implies $y_1\geq y_2$ and $u_2-u_1<0$ implies $y_1\leq y_2$, which also holds with $\sigma$; they therefore give $e_1>0$, $u_1,u_2>0$, and $h_1+h_2=2a/\gamma$. If both efforts are positive, adding and subtracting the effort conditions gives $e_1+\lambda e_2=E$ and $\sigma=(\lambda-1)(1+\mu)/[(\lambda+1)(1-\mu)]\in(0,1)$, which requires $y_1=y_2$; the manipulation conditions give $h_1=E/\gamma$ and $h_2=\lambda E/\gamma$, so $e_1-e_2=(\lambda-1)E/\gamma$ and $e_2=[E-(\lambda-1)E/\gamma]/(1+\lambda)$. Thus both efforts are positive if and only if $\gamma>\lambda-1$, and otherwise $e_2=0$. The optimum is unique. Two optima would share $h$ and $e_1+\lambda e_2$, because the cost terms are strictly concave in those arguments. The score term would then have to be constant along the segment joining them, on which $y_1+y_2$ and $y_1-y_2$ change in the ratio $(1-\lambda):-(1+\lambda)$. The segment cannot cross the kink, because the score term is not affine along any segment that crosses it, and on either side constancy requires $a(1-\lambda)+\sigma b(1+\lambda)=0$ with $\sigma\in\{-1,1\}$. For $\sigma=-1$ both terms are negative, and for $\sigma=1$ the equation reduces to $\lambda\mu=1$, which is excluded.

\item \textbf{Total manipulation and the boundary.} For every $\eta\in[0,\infty]$, total manipulation is $h_1+h_2=2a/\gamma$: Proposition~\ref{prop:manipulation} gives this for finite $\eta$, including the transparent score at $\eta=0$, and step~1 gives it at $\eta=\infty$. Hence $h_1^2+h_2^2\geq(h_1+h_2)^2/2=2a^2/\gamma^2$, with equality if and only if $h_1=h_2$. At $\eta=0$, $h_1=h_2=a/\gamma$ and $e_2=0$, so $V(0)=-\tau a^2/\gamma^2$. If $\eta\in(0,\infty]$ and $e_2=0$, then $e_1e_2=0$ and $V(\eta)\leq V(0)$. The inequality is strict when $\tau>0$ because $h_1\neq h_2$: with $e_2=0$, equal manipulation would give $y_1-y_2=e_1>0$, hence $u_2>u_1$ and, by \eqref{eq:KTh}, $h_2>h_1$, a contradiction.

\item \textbf{Choices with positive effort on both dimensions.} Suppose $\gamma>\lambda-1$. By Proposition~\ref{prop:manipulation} and step~1, both efforts are positive exactly when $\eta>\eta_I$, where $\eta_I$ solves $\Delta(\eta_I)+(\lambda-1)E/\gamma=E$, including $\eta=\infty$ with $\Delta(\infty)\triangleq0$. For these choices $h_1=E/\gamma$ and $h_2=\lambda E/\gamma$, so the manipulation loss is the constant $(\tau/2)(1+\lambda^2)E^2/\gamma^2$, and the real gap $\delta\triangleq e_1-e_2=\Delta(\eta)+(\lambda-1)E/\gamma$ decreases strictly in $\eta$ and ranges over $[(\lambda-1)E/\gamma,E)$. Solving $e_1+\lambda e_2=E$ and $e_1-e_2=\delta$ gives
\[
e_1e_2=\mathcal P(\delta)\triangleq\frac{E^2+(\lambda-1)E\delta-\lambda\delta^2}{(1+\lambda)^2},
\]
a strictly concave quadratic maximized at $\delta^R\triangleq(\lambda-1)E/(2\lambda)$, where $e_1=E/2$ and $e_2=E/(2\lambda)$. If $\gamma>2\lambda$, then $\delta^R$ lies strictly inside the range, and the unique maximizing choice is the $\eta$ with $\Delta(\eta)=\delta^R-(\lambda-1)E/\gamma=(\lambda-1)(\gamma-2\lambda)E/(2\lambda\gamma)$. By \eqref{eq:EDelta}, that choice is
\[
\eta^R\triangleq\frac{2\lambda\gamma}{\beta(1-\mu)(\lambda-1)(\gamma-2\lambda)E}\log\frac{\lambda-\mu}{1-\lambda\mu},
\]
and the measured gap there is $y_1-y_2=\Delta(\eta^R)=(\lambda-1)(\gamma-2\lambda)E/(2\lambda\gamma)>0$, as part (b) states. The maximized product is $\mathcal P(\delta^R)=E^2/(4\lambda)$. If $\lambda-1<\gamma\leq2\lambda$, then $\delta^R\leq(\lambda-1)E/\gamma$, so $\mathcal P$ decreases strictly over the range and the unique maximizing choice is $\eta=\infty$, with
\[
\mathcal P\left(\frac{(\lambda-1)E}{\gamma}\right)=\frac{E^2[\gamma+\lambda(\lambda-1)](\gamma-\lambda+1)}{\gamma^2(1+\lambda)^2}.
\]

\item \textbf{Comparison with the mean score.} Let $\eta^*$ denote the maximizing choice from step~3, which is $\eta^R$ when $\gamma>2\lambda$ and $\infty$ when $\lambda-1<\gamma\leq2\lambda$, and let $\mathcal P^*$ denote the maximized product. Using $a=(1+\lambda)E/2$ and $2(1+\lambda^2)-(1+\lambda)^2=(\lambda-1)^2$, the best choice with positive effort on both dimensions satisfies
\[
V(\eta^*)-V(0)=\mathcal P^*-\frac{\tau}{2}\frac{(1+\lambda^2)E^2}{\gamma^2}+\frac{\tau a^2}{\gamma^2}=\mathcal P^*-\frac{\tau(\lambda-1)^2E^2}{4\gamma^2}.
\]
By step~2, every choice with $e_2=0$ is weakly worse than $\eta=0$, strictly if $\tau>0$; when $\tau=0$ such choices give $V=0<\mathcal P^*$. Hence the provider chooses $\eta^*$ if $\tau<\bar\tau\triangleq4\gamma^2\mathcal P^*/[(\lambda-1)^2E^2]$, which is positive because $\mathcal P^*>0$, chooses $\eta=0$ if $\tau>\bar\tau$, and is indifferent between the two at equality. Substituting the two values of $\mathcal P^*$ from step~3 gives
\[
\bar\tau=\begin{cases}\dfrac{\gamma^2}{\lambda(\lambda-1)^2}, & \gamma>2\lambda,\\[1.2em]
\dfrac{4[\gamma+\lambda(\lambda-1)](\gamma-\lambda+1)}{(\lambda+1)^2(\lambda-1)^2}, & \lambda-1<\gamma\leq2\lambda,\end{cases}
\]
which proves part (b). If $\gamma\leq\lambda-1$, every $\eta\in(0,\infty]$ has $e_2=0$ by Proposition~\ref{prop:manipulation} and step~1, so $\eta=0$ is optimal, uniquely if $\tau>0$, proving part (a). \qedhere
\end{steps}
\end{proof}

\subsection{Proof of Proposition~\ref{prop:riskcs}}

\begin{proof}
We first compare certainty equivalents at a fixed concealment choice, then use increasing differences to compare optimal concealment choices, and finally characterize deterrence at finite $\eta$.
\begin{steps}
\item \textbf{Each exploit becomes less attractive.} Fix $h\in H$ and $\eta'>\eta>0$. Raising a positive number to the power $\eta'/\eta>1$ is a strictly convex operation, and $e^{-\eta W_1}$ and $e^{-\eta W_2(h)}$ are distinct because $W_2(h)<W_1$. Strict Jensen's inequality gives
\[
\tfrac12e^{-\eta'W_1}+\tfrac12e^{-\eta'W_2(h)}
>\left[\tfrac12e^{-\eta W_1}+\tfrac12e^{-\eta W_2(h)}\right]^{\eta'/\eta}.
\]
Taking logarithms and multiplying by $-1/\eta'<0$ reverses the inequality, so $\psi_{\eta'}(W_2(h))<\psi_\eta(W_2(h))$. Subtracting the same cost gives $\Phi_{\eta'}(h)<\Phi_\eta(h)$.

\item \textbf{The optimized exploit payoff falls.} Continuity on the compact set $H$ ensures that each maximum is attained. For any $h'\in\argmax_{h\in H}\Phi_{\eta'}(h)$, step~1 implies
\[
\max_{h\in H}\Phi_{\eta'}(h)
=\Phi_{\eta'}(h')<\Phi_\eta(h')
\leq\max_{h\in H}\Phi_\eta(h),
\]
proving part~(a).

\item \textbf{Concealment has increasing differences.} For $u<W_1$, differentiation of \eqref{eq:Phi} gives
\[
\psi_\eta'(u)=\frac{1}{1+e^{-\eta(W_1-u)}},\qquad
\frac{\partial}{\partial\eta}\psi_\eta'(u)
=\frac{(W_1-u)e^{-\eta(W_1-u)}}
{[1+e^{-\eta(W_1-u)}]^2}>0.
\]
For any $h<h'$ in $H$, the fundamental theorem of calculus yields
\[
\Phi_\eta(h')-\Phi_\eta(h)
=\int_{W_2(h)}^{W_2(h')}\psi_\eta'(u)\,du-[c(h')-c(h)].
\]
Strict increase of $W_2$ makes this a nondegenerate interval, and every point in it is below $W_1$. The difference is therefore strictly increasing in $\eta$.

\item \textbf{Optimal concealment weakly rises.} Step~3 establishes strict increasing differences. Lemma~\ref{lem:oversight} therefore orders every pair of optimal concealment choices: $h'\geq h$ when $\eta'>\eta$. This proves part~(b) by the same argument used for audit intensity.

\item \textbf{The limiting payoff.} For $u<W_1$, factor $e^{-\eta u}$ out of the expression in \eqref{eq:Phi} to obtain
\[
\psi_\eta(u)-u=-\frac1\eta
\log\left[\tfrac12+\tfrac12e^{-\eta(W_1-u)}\right].
\]
The bracket lies strictly between $1/2$ and $1$, giving $0<\psi_\eta(u)-u<(\log2)/\eta$. This bound is independent of $u$. Set $\Phi_\infty\triangleq\max_{h\in H}\{W_2(h)-c(h)\}$, which exists by continuity and compactness. Subtracting $c(h)$ and maximizing, with both maxima attained, gives $\Phi_\infty<\max_h\Phi_\eta(h)<\Phi_\infty+(\log2)/\eta$. Both bounds converge to $\Phi_\infty$, establishing the limit used in the text.

\item \textbf{Finite deterrence.} If $\Phi_\infty<s$, every $\eta>(\log2)/(s-\Phi_\infty)$ gives $\max_h\Phi_\eta(h)<s$, so the exploit is strictly deterred. If $\Phi_\infty\geq s$, the lower bound in step~5 gives $\max_h\Phi_\eta(h)>\Phi_\infty\geq s$ for every finite $\eta$, so no finite $\eta$ deters the exploit, even weakly. This proves part (c). \qedhere
\end{steps}
\end{proof}

\subsection{Proof of Proposition~\ref{prop:disclosure}}

\begin{proof}
We work backward from disclosure to the initial decision.
\begin{steps}
\item \textbf{Disclosure after a violation.} Continuity on the compact set $H$ ensures that the best payoff from continuing without disclosure is attained and equals $b-m(pF)$. The sunk payoff $x-k$ is common to both continuations. Disclosure is therefore weakly preferred exactly when
\[
s_D-\rho-\ell\geq b-m(pF)
\quad\Longleftrightarrow\quad
\ell\leq\overline\ell(p).
\]
Because the best continuation without disclosure is attained, disclosure is strictly preferred to every such continuation exactly when $\ell<\overline\ell(p)$.

\item \textbf{Stopping before a violation.} If disclosure is optimal, the best total payoff from initiating a violation is $x-k+s_D-\rho-\ell$. Initial stopping is thus weakly preferred exactly when
\[
s\geq x-k+s_D-\rho-\ell
\quad\Longleftrightarrow\quad
\ell\geq\underline\ell.
\]
Strict preference requires $\ell>\underline\ell$. Since disclosure is already a best continuation, this initial comparison also covers a violation followed by nondisclosure.

\item \textbf{Permissible fines.} Intersecting the two weak incentive constraints with $0\leq\ell\leq\ell_{\max}$ gives
\[
\ell\in
[\max\{0,\underline\ell\},\min\{\ell_{\max},\overline\ell(p)\}].
\]
The interval is nonempty exactly when some permissible fine satisfies both constraints. Making the two behavioral comparisons strict gives \eqref{eq:finestrict}. Its behavioral bounds are open because equality leaves the undesired choice optimal. Its permitted endpoints remain closed: a zero or maximum fine can satisfy both strict comparisons. \qedhere
\end{steps}
\end{proof}

\subsection{Proof of Corollary~\ref{cor:continuation}}

\begin{proof}
We establish the limit of $m(pF)$, restate the nonemptiness condition \eqref{eq:monitor}, and then prove parts (b) and (c).
\begin{steps}
\item \textbf{Monotonicity and an upper bound.} Fix $p\in[0,1]$. For $F'>F>0$ and every $h$, $c(h)+pF'd(h)\geq c(h)+pFd(h)$ because $p(F'-F)d(h)\geq0$. Taking minima over $h$ gives $m(pF')\geq m(pF)$. The minimum defining $c^0$ exists when $H^0$ is nonempty, because $H^0=d^{-1}(\{0\})$ is closed in the compact set $H$, hence compact, and $c$ is continuous. Every $h\in H^0$ gives $m(pF)\leq c(h)+pF\cdot0=c(h)$, so $m(pF)\leq c^0$ for every $p\in[0,1]$ and every $F>0$.

\item \textbf{The limit when $H^0$ is empty.} Fix $p\in(0,1]$. Then $d>0$ on the compact set $H$, and the extreme value theorem gives $\underline d\triangleq\min_Hd>0$. Since $c\geq0$, every $h$ satisfies $c(h)+pFd(h)\geq pF\underline d$, so $m(pF)\geq pF\underline d$, which tends to infinity as $F\to\infty$ because $p>0$ and $\underline d>0$.

\item \textbf{The limit when $H^0$ is nonempty.} Fix $p\in(0,1]$. By step~1, $m(pF)$ is nondecreasing in $F$ and bounded above by $c^0$, so it converges to a limit $m_\infty\leq c^0$. Suppose, for contradiction, that $m_\infty<c^0$. For each integer $n\geq1$, choose $h_n$ that minimizes $c(h)+pn\,d(h)$ over $H$, which exists by continuity and compactness. Then $c(h_n)+pn\,d(h_n)=m(pn)\leq m_\infty$. Because $c(h_n)\geq0$, this gives $0\leq d(h_n)\leq m_\infty/(pn)$, and because $pn\,d(h_n)\geq0$, it gives $c(h_n)\leq m_\infty$. Since $H$ is a compact metric space, a subsequence $h_{n_k}$ converges to some $h^\circ\in H$. Continuity gives $d(h^\circ)=\lim_kd(h_{n_k})=0$, so $h^\circ\in H^0$, and $c(h^\circ)=\lim_kc(h_{n_k})\leq m_\infty$. But $h^\circ\in H^0$ implies $c(h^\circ)\geq c^0>m_\infty$, which is a contradiction. Hence $m_\infty=c^0$, and together with step~2 this proves part~(a).

\item \textbf{Nonemptiness.} Fix $p\in[0,1]$ and $F>0$. The interval in \eqref{eq:fineweak} is nonempty if and only if $\max\{0,\underline\ell\}\leq\min\{\ell_{\max},\overline\ell(p)\}$, that is, if and only if $0\leq\ell_{\max}$, $0\leq\overline\ell(p)$, $\underline\ell\leq\ell_{\max}$, and $\underline\ell\leq\overline\ell(p)$. The first holds by assumption. By \eqref{eq:LU}, $\overline\ell(p)\geq0$ is equivalent to $m(pF)\geq b-s_D+\rho$, and because $\overline\ell(p)-\underline\ell=m(pF)-(x-k+b-s)$, the inequality $\overline\ell(p)\geq\underline\ell$ is equivalent to $m(pF)\geq x-k+b-s$. Thus the interval is nonempty if and only if $\underline\ell\leq\ell_{\max}$ and $m(pF)\geq\underline m$.

\item \textbf{Part (b).} Suppose $c^0>\underline m$ and fix $p\in(0,1]$. By steps~2 and~3, there is $\bar F$ with $m(p\bar F)>\underline m$, and by step~1, $m(pF)>\underline m$ for every $F\geq\bar F$. Fix such an $F$. By step~4, a fine satisfying \eqref{eq:fineweak} exists if and only if $\underline\ell\leq\ell_{\max}$. For the set in \eqref{eq:finestrict}, note first that
\begin{align*}
\overline\ell(p)&=s_D-\rho+m(pF)-b>s_D-\rho+\underline m-b\geq0,\\
\overline\ell(p)-\underline\ell&=m(pF)-(x-k+b-s)>\underline m-(x-k+b-s)\geq0,
\end{align*}
where each weak inequality holds because $\underline m$ is the maximum of $b-s_D+\rho$ and $x-k+b-s$. If $\underline\ell\geq\ell_{\max}$, every $\ell$ in the set would satisfy $\ell>\underline\ell\geq\ell_{\max}$, contradicting $\ell\leq\ell_{\max}$, so the set is empty. If $\underline\ell<\ell_{\max}$ and $\underline\ell<0$, the fine $\ell=0$ lies in $[0,\ell_{\max}]$ and satisfies $\underline\ell<0<\overline\ell(p)$. If $0\leq\underline\ell<\ell_{\max}$, let $u\triangleq\min\{\ell_{\max},\overline\ell(p)\}$, which exceeds $\underline\ell$ by the display above. The fine $\ell=(\underline\ell+u)/2$ then satisfies $0\leq\underline\ell<\ell<u\leq\ell_{\max}$ and $\ell<\overline\ell(p)$. Thus the set in \eqref{eq:finestrict} is nonempty if and only if $\underline\ell<\ell_{\max}$.

\item \textbf{Part (c).} Suppose $c^0<\underline m$. Then $c^0$ is finite, so $H^0$ is nonempty, and step~1 gives $m(pF)\leq c^0<\underline m$ for every $p\in[0,1]$ and every $F>0$. By step~4, no fine satisfies \eqref{eq:fineweak}. \qedhere
\end{steps}
\end{proof}

\clearpage
\begingroup
\setstretch{1}
\small
\raggedright
\bibliographystyle{aer}
\bibliography{references-v11}
\endgroup

\clearpage
\setcounter{page}{1}
\renewcommand{\thepage}{OA-\arabic{page}}
\setcounter{footnote}{0}
\setcounter{section}{0}
\renewcommand{\thesection}{OA.\arabic{section}}
\renewcommand{\theHsection}{OA.\arabic{section}}
\begin{center}
{\large\bfseries Online Appendix}\\[0.4em]
{When Does Randomized Oversight Align AI Agents That Can Conceal?}\\[0.2em]
{Joshua S. Gans and Richard Holden}
\end{center}

\section{Overlapping Evaluators and Uncertainty}\label{sec:robust}

Evaluators can inspect several channels, and their costs can differ. Index evaluators by $j\in\{1,\ldots,J\}$. For channel $i$, a nonempty compact set $Z_i$ indexes the possible primitive models together with the agent's type and concealment choice. Continuous functions
\[
G_i(z)\triangleq g_i(z)-c_i(z),\qquad Q_{ji}(z)\geq0,\qquad z\in Z_i,
\]
give the net private gain and the expected utility loss conditional on evaluator $j$; for example, $Q_{ji}=F_jd_{ji}$ if evaluator $j$ detects the deviation with probability $d_{ji}$ and imposes loss $F_j$. The uncertainty set is a maintained description of possible models, not a claim that finite observations reveal every concealment strategy. The provider chooses mutually exclusive audit probabilities $p_j\geq0$ with $\sum_jp_j\leq1$, and evaluator $j$ costs $\omega_j>0$. We retain commitment, correct beliefs, and no false sanctions on authorized actions. The least costly policy guaranteeing a utility margin $\varepsilon>0$ for every modeled deviation solves
\begin{equation}\label{eq:LP}
\begin{aligned}
\min_{p\in\R^J_+}\quad & \sum_{j=1}^J\omega_jp_j\\
\text{subject to}\quad & \sum_{j=1}^Jp_j\leq1,\\
& \sum_{j=1}^Jp_jQ_{ji}(z)\geq G_i(z)+\varepsilon
\quad\text{for every }i\text{ and }z\in Z_i.
\end{aligned}
\end{equation}
A policy exists exactly when these constraints are feasible, and the minimum is then attained.\footnote{The incremental utility of deviation $(i,z)$ is $G_i(z)-\sum_jp_jQ_{ji}(z)$. Requiring it to be at most $-\varepsilon$ gives the constraint in \eqref{eq:LP}. An intersection of closed half-spaces with the probability simplex is closed and bounded, hence compact. The continuous cost function attains its minimum on a nonempty feasible set. Mixtures entirely over unauthorized actions preserve the margin; a mixture assigning probability $a>0$ to misconduct is worse by at least $a\varepsilon$.}

Each constraint covers one compound deviation. A review of independently retained records could expose both unauthorized flag acquisition and attempts to disguise it, supplying overlapping coverage. Several evaluators reading the same falsified record instead share a common blind spot. What matters is which deviations their evidence can expose, not how many evaluators are named. With finitely many modeled deviations, \eqref{eq:LP} is a linear program; with a continuum, it has finitely many variables but potentially infinitely many constraints. For separate evaluators, $Q_{ji}=0$ for $j\neq i$ and $Q_{ii}\triangleq Q_i>0$, the constraints reduce to $p_i\geq a_i^\varepsilon\triangleq\max\{0,\max_{z\in Z_i}[G_i(z)+\varepsilon]/Q_i(z)\}$, so the policy is feasible exactly when $\sum_ia_i^\varepsilon\leq1$, with unique cost-minimizing probabilities $p_i=a_i^\varepsilon$.\footnote{Continuity and positivity of $Q_i$ on compact $Z_i$ give a positive minimum, so the ratio is continuous and attains a finite maximum. Dividing each constraint by $Q_i(z)>0$ reduces the channel's constraints to $p_i\geq a_i^\varepsilon$. Their sum yields feasibility. Any other feasible vector increases at least one coordinate without lowering another; since every $\omega_i>0$, its cost is strictly higher.} If all evaluators assign zero loss to a state with $G_i(z)+\varepsilon>0$, the program is infeasible, and more audit expenditure cannot cure that failure. As in Proposition~\ref{prop:unrestricted}, raising a credible $F_j$ with detection fixed weakly expands feasibility and weakly lowers the minimum expenditure, but it cannot reach a state that no evaluator detects.

A design based on estimated incentives needs a margin large enough to cover errors in those estimates. Suppose estimation errors are uniformly bounded over all indices and states by $|G_i-\widehat G_i|\leq\delta_G$ and $|Q_{ji}-\widehat Q_{ji}|\leq\delta_Q$, where $\delta_G,\delta_Q\geq0$. A policy meeting the estimated constraints with margin $\varepsilon$ guarantees a true margin of at least
\begin{equation}\label{eq:transfermargin}
\varepsilon-\delta_G-\delta_Q.
\end{equation}
Thus, a margin exceeding $\delta_G+\delta_Q$ preserves strict deterrence.\footnote{For each deviation, $\sum_jp_jQ_{ji}-G_i\geq\sum_jp_j\widehat Q_{ji}-\widehat G_i-\delta_Q\sum_jp_j-\delta_G\geq\varepsilon-\delta_Q-\delta_G$, using $p_j\geq0$ and $\sum_jp_j\leq1$. The bound is uniform because both primitive-error bounds are uniform. The loss-estimation error is probability-weighted, so its bound does not grow with the number of evaluators when at most one is selected.} The missing provenance check in the Hugging Face incident shows why applying this guarantee also requires verifying implementation. If a design assigns an expected loss $\widehat Q_{ji}>0$ to a check that does not run, its implemented loss is $Q_{ji}=0$, and the error allowance must cover that discrepancy. Uniform bounds must also cover concealment chosen in response to the policy, and accuracy on previously observed, unmodified records does not establish them. These results minimize audit expenditure under fixed incentives, rather than total social loss.

\section{Detection Examples}\label{oa:detection}

These examples illustrate the catch-rate condition \eqref{eq:logcurv} that follows Proposition~\ref{prop:detectionharm}. Both examples use $c(h)=\kappa h$ with $\kappa>0$ and $h\geq0$.

\textit{Log-convex detection.} Let $d(h)=(1+h)^{-2}$. For $t>0$ the objective $\kappa h+t(1+h)^{-2}$ is strictly convex, so for $t>\kappa/2$ its unique minimizer is interior and satisfies $(1+h^*)^3=2t/\kappa$. The catch rate is $q(p)=(t/F)(1+h^*)^{-2}=(\kappa/2)^{2/3}t^{1/3}/F$, which strictly increases with $p$, as \eqref{eq:logcurv} requires since $(\log d)''=2/(1+h)^2>0$.

\textit{Log-concave detection.} Let $d(h)=e^{-\alpha h^2}$ with $\alpha>0$. A positive stationary minimum satisfies $\kappa=2\alpha t h\,e^{-\alpha h^2}$ and $h>1/\sqrt{2\alpha}$; the only other candidate is $h=0$. At a positive stationary point, the difference from the unconcealed cost is $\kappa[2\alpha h^2+1-e^{\alpha h^2}]/(2\alpha h)$. The equation $e^y=1+2y$ has a unique positive solution $y_S\simeq1.25643$; its left side minus its right side decreases until $y=\log2$ and then increases strictly to infinity. Hence positive concealment is the unique global minimum for $t>t_S\triangleq\kappa e^{y_S}/(2\sqrt{\alpha y_S})$, whereas zero is the unique minimum below $t_S$, and both minimize at equality. On the positive branch, $q(p)=\kappa/(2\alpha Fh^*)$ strictly decreases because $h^*$ strictly increases. For $\kappa=0.1$ and $\alpha=F=1$, $t_S\simeq0.15670$; raising $p$ from $0.2$ to $1$ increases optimal concealment from approximately $1.277$ to $1.908$ and lowers the catch rate from approximately $0.0392$ to $0.0262$ (the value at $p=1$ is the continuous endpoint limit of the branch). With gain $g=0.25$, misconduct remains strictly attractive throughout, since even at $p=1$ the minimized cost is approximately $0.2170$; with $g=0.2$, misconduct ceases above approximately $p=0.540$, so the final catch rate would describe only a counterfactual continuing violator.

\section{Risk-Sensitive Scoring and Reward Ensembles}\label{oa:ensemble}

The entropic certainty equivalent in \eqref{eq:CEscore} connects the training objective of Section~\ref{sec:opacity} to conservative scoring rules already studied in AI. A reward-model ensemble is a collection of models that score the same outcome. \citet[eqs.~(4)--(5)]{coste2024} implement and test both the lowest ensemble score and the mean score less a multiple of the variance in reinforcement-learning experiments. The entropic certainty equivalent approaches the former as $\eta$ grows and approximates the latter for small $\eta$, with variance coefficient $\eta/2$. \citet[Sections~3.3--3.5]{hahami2026} establish this connection through a robust objective that gives greater weight to unfavorable score distributions. Risk sensitivity can therefore be a property of the provider's training rule. The resources needed to reach the optima of such a rule are a separate issue in risk-sensitive learning \citep{fei2020}.

The two comparisons follow by factoring out the lowest score and expanding around $\eta=0$. Fix $J\geq1$ finite scores $W_j\in\R$, drawn with probabilities $p_j>0$ that satisfy $\sum_{j=1}^Jp_j=1$. Write $\mathrm{CE}_\eta\triangleq-\eta^{-1}\log\sum_jp_je^{-\eta W_j}$ for $\eta>0$.

\paragraph{Lowest-score limit.} Let $w\triangleq\min_jW_j$ and $P_w\triangleq\sum_{j:W_j=w}p_j>0$. Since $W_j-w\geq0$, the sum $S_\eta\triangleq\sum_jp_je^{-\eta(W_j-w)}$ lies in $[P_w,1]$. Thus $\mathrm{CE}_\eta=w-\eta^{-1}\log S_\eta$ and, because multiplication by $-1/\eta<0$ reverses inequalities,
\[
w\leq\mathrm{CE}_\eta\leq w-\frac{\log P_w}{\eta}\longrightarrow w\quad\text{as }\eta\to\infty.
\]

\paragraph{Small-risk approximation.} Define $\overline W\triangleq\sum_jp_jW_j$ and $\sigma^2\triangleq\sum_jp_j(W_j-\overline W)^2$. The finite sum of exponential Taylor expansions gives $\sum_jp_je^{-\eta W_j}=1-\eta\overline W+\eta^2\sum_jp_jW_j^2/2+O(\eta^3)$. Applying $\log(1+u)=u-u^2/2+O(u^3)$ yields
\[
\log\sum_jp_je^{-\eta W_j}=-\eta\overline W+\frac{\eta^2\sigma^2}{2}+O(\eta^3),\qquad
\mathrm{CE}_\eta=\overline W-\frac{\eta\sigma^2}{2}+O(\eta^2).
\]
These are pointwise statements for fixed scores as $\eta\downarrow0$. If all scores coincide, $P_w=1$, $\sigma^2=0$, and $\mathrm{CE}_\eta=w$ for every $\eta>0$.

\section{A Provider That Values the Weaker Dimension}\label{oa:min}

The finite optimum in Proposition~\ref{prop:objective} reflects the provider's valuation of complementary output. A provider valuing only $\min\{e_1,e_2\}$ would choose $\eta=\infty$ whenever $\gamma>\lambda-1$ and $\tau<4\gamma(\gamma-\lambda+1)/[(1+\lambda)(\lambda-1)^2E]$, because real effort on the expensive dimension then rises with $\eta$ while manipulation is constant.

\section{A Common Disclosure Policy}\label{oa:common}

If the opportunity is privately known, the provider may need a single fine to work across opportunities. For a finite nonempty set of types $\theta$, let $\underline\ell_\theta$ and $\overline\ell_\theta(p)$ be the bounds \eqref{eq:LU} of Proposition~\ref{prop:disclosure} for type $\theta$, with common $p$ and $\ell_{\max}$. Intersecting the type-specific inequalities shows that a common fine exists for weak implementation exactly when
\begin{equation}\label{eq:common}
\max\{0,\max_\theta\underline\ell_\theta\}
\leq\min\{\ell_{\max},\min_\theta\overline\ell_\theta(p)\}.
\end{equation}
For strict implementation, the fine must lie in $[0,\ell_{\max}]\cap(\max_\theta\underline\ell_\theta, \min_\theta\overline\ell_\theta(p))$; finiteness ensures that these extrema are attained. Opportunity-dependent treatment would require observable or separately implementable differences, and the interval does not itself elicit private information. The same reporting rule might have to cover an agent that has copied a flag and one that has acquired access enabling further unauthorized activity. Their retained gains and reasons to continue can differ enough that individually feasible treatments have no common value. For a longer task, the continuation comparison must also be checked at every relevant history with the benefits, costs, and opportunities remaining there.

\end{document}